\date{7 October, 2013}
\title{VCG Auction Mechanism Cost Expectations and Variances}
\numberwithin{equation}{section}
\author{Svante Janson}
\thanks{SJ partly supported by the Knut and Alice Wallenberg Foundation}
\address[Svante Janson]
{Department of Mathematics, Uppsala University, PO Box 480,
SE-751~06 Uppsala, Sweden}
\email{svante.janson@math.uu.se
  \quad http://www2.math.uu.se/{\tiny$\sim$}svante/}
\author[Gregory B. Sorkin]{Gregory B. Sorkin}
\address[Gregory B. Sorkin]{
Department of Management \\
London School of Economics and Political Science \\
Houghton Street \\
London WC2A 2AE
}
\email{g.b.sorkin@lse.ac.uk}
\thanks{This paper was largely written during visits to 
Institut Mittag-Leffler, Sweden, 
and 
Mathematisches Forschungsinstitut Oberwolfach, Germany.}
\newtheorem{theorem}{Theorem}[section]
\newtheorem{lemma}[theorem]{Lemma}
\newtheorem{corollary}[theorem]{Corollary}
\theoremstyle{definition}
\newtheorem{example}[theorem]{Example}
\newtheorem*{example*}{Example}
\newtheorem{remark}[theorem]{Remark}
\newcommand{\set}[1]{\left\{#1\right\}}
\def\b{\beta}
\renewcommand{\Re}{\mathbb{R}}
\newcommand{\E}{\mathbb{E}}
\newcommand{\VCG}{{\operatorname{VCG}}}
\newcommand{\cstar}{c^*}
\newcommand{\cVCG}{c^\VCG}
\newcommand{\cideal}{c} 
\newcommand{\Sstar}{S^*}
\renewcommand{\SS}{{\mathbf S}}
\newcommand{\Pra}[1]{{\Pr\nolimits_a \left( {#1} \right) }}
\newcommand{\one}[1]{{\mathbf 1} \! \left[ {#1} \right] }
\newcommand{\Replus}{\Re_{\ge 0}}
\newcommand{\ignore}[1]{\relax}
\newcommand{\Iz}[1]{I_{#1}^0}
\newcommand{\Iinf}[1]{I_{#1}^\infty}
\newcommand{\IFE}{I_{F\setminus E,E}^{0,\infty}}
\newcommand{\Ixx}[2]{I_{#1,#2}^{0,\infty}}
\newcommand{\Minf}[1]{M_{#1}^\infty}
\newcommand{\cond}{ \mid } 
\newcommand{\vcgt}[2]{ \cVCG(#2; #1) }
\newcommand{\cinc}{{c^+}}
\newcommand{\rk}{r}
\newcommand\ga{\alpha}
\newcommand\gD{\Delta}
\newcommand\gl{\lambda}
\newcommand\gs{\sigma}
\newcommand\gss{\sigma^2}
\newcommand\eps{\varepsilon}
\newcommand\iid{i.i.d.\spacefactor=1000}     
\newcommand\ie{i.e.\spacefactor=1000}     
\newcommand{\as}{a.s.\spacefactor=1000}     
\newcommand{\cf}{c.f.\spacefactor=1000}     
\newcommand\bigpar[1]{\bigl(#1\bigr)}
\newcommand\Bigpar[1]{\Bigl(#1\Bigr)}
\newcommand\Bigsqpar[1]{\Bigl[#1\Bigr]}
\newcommand{\refT}[1]{Theorem~\ref{#1}}
\newcommand{\refC}[1]{Corollary~\ref{#1}}
\newcommand{\refL}[1]{Lemma~\ref{#1}}
\newcommand{\refR}[1]{Remark~\ref{#1}}
\newcommand{\refS}[1]{Section~\ref{#1}}
\newcommand{\refE}[1]{Example~\ref{#1}}
\newcommand\nota{{A\setminus a}}
\newcommand\notax{\nota}
\newcommand\mm[1]{\mathcal M(#1)}
\newcommand\mmx{\mathcal M}
\newcommand\qm[1]{[\mmx,\mmx]_{#1}}
\newcommand\Var{\operatorname{Var}}
\newcommand\Cov{\operatorname{Cov}}
\newcommand\Exp{\operatorname{Exp}}
\newcommand\cF{\mathcal F}
\newcommand\cG{\mathcal G}
\newcommand\oi{[0,1]}
\newcommand\intoi{\int_0^1}
\newcommand\kk{\kappa}
\newcommand\subb[1]{_{(#1)}}
\newcommand\eqd{\overset{\mathrm{d}}{=}}
\newcommand\xfrac[2]{#1/#2}
\newcommand{\tend}{\longrightarrow}
\newcommand\dto{\overset{\mathrm{d}}{\tend}}
\newcommand\ntoo{\ensuremath{{n\to\infty}}}
\newcommand{\cpriv}{c^{\operatorname{priv}}}
\newcommand{\ta}{\cVCG(a;I)}
\newcommand{\BB}{B}
\newcommand{\weight}{cost\xspace}
\newcommand{\vv}{c^{\operatorname{Xvcg}}}
\newcommand{\Varb}[1]{\Var\bigpar{#1}}
\newcommand{\Eb}[1]{\E\bigpar{#1}}
\begin{document}

\begin{abstract}  
We consider Vickrey--Clarke--Groves (VCG) auctions 
for a very general combinatorial structure,
in an average-case setting
where item costs are independent, identically
distributed uniform random variables. 
We prove
that the expected VCG cost is at least double 
the expected nominal cost, 
and exactly double when the desired structure is a basis of a bridgeless
matroid. 
In the matroid case we further show that, conditioned upon the VCG cost,
the expectation of the nominal cost is exactly half the VCG cost,
and we show several results on 
variances and covariances among the nominal cost, the VCG cost,
and related quantities.
As an application, we find the asymptotic variance of the VCG cost of the
minimum spanning tree in a complete graph with random edge costs.
\end{abstract}

\maketitle

\section{Introduction and outline} \label{intro}

We begin with a motivating example.  
We want to buy a specific structure in a graph, such
as a spanning tree.  
Each edge has a separate owner,
who has a minimum ``true'' cost for which he would sell the edge;
this cost is private, known only to the owner.
One obvious ``mechanism'' for choosing a spanning tree to buy
is to ask each owner for the price of his edge
(for example by a simple sealed-bid auction), and
then choose the minimum (cheapest) spanning tree (MST) 
and pay the owners of the chosen edges
the prices that they have demanded. The problem with this and many other
mechanisms is that the owners have an incentive to lie about their true
costs:
an owner that claims a higher price will get more money, unless the price
becomes so high that another spanning tree is chosen.

The  Vickrey--Clarke--Groves (VCG) auction 
\cites{vickrey,clarke,Groves}
is designed to
avoid this problem. By using a clever mechanism defined 
in \refS{SSVCG} below,
which typically will pay the owners of the chosen edges more than their
claimed costs, an owner maximizes his profit by claiming his true cost of
his edge
(assuming that the owners act without collusion).
Thus the selected 
structure will really be the cheapest one, and
therefore the most efficient from society's point of view (in terms of
using the least resources), although the price we pay for it is higher than
the true cost.
The main purpose of the present paper is to study this overpayment.

Examples are known for which the VCG mechanism 
results
in arbitrarily large overpayment;
see for example \cite{AT07}. 
A typical situation may be better represented by an average-case setting.
Previous work has shown that 
with independent, identically distributed uniform item costs,
the expectation of the VCG auction price $\cVCG$ 
is exactly 2 times that of the nominal cost $\cstar$
in a procurement auction for a minimum spanning tree, 
and larger by a factor asymptotically approaching 2 in auctions for 
a perfect matching in a complete bipartite graph,
and a path in a complete graph \cite{CFMS}.

A unifying explanation for these results was given by \cite{AHW},
see \refS{Spf1}.
Using their argument, we show
in \refT{Th1} that for any VCG auction with independent,
uniformly distributed item costs, 
$\E(\cVCG) \geq 2 \E(\cstar)$,
and in \refT{Th2} that in a matroid setting 
(including the minimum spanning tree example) 
$\E(\cVCG) = 2 \E(\cstar)$.
In the matroid case, we also show  further results.
In particular, 
a matroidal development of the same ideas leads to \refT{Thcond},
the much stronger result that
$\E\bigpar{\cstar\mid\cVCG} = \tfrac12\cVCG$ for the conditional expectation.
This implies that $\Cov(\cstar,\cVCG) =\frac12\Var(\cVCG)$,
and \refT{Thvar} gives a variety of other variance relations.

\subsubsection*{Outline}
\refS{SSVCG} introduces VCG auctions, related notation, 
and the basic properties that we will draw on.
\refS{Smain} states the theorems just described,
along with some related results.
Proofs are given in Sections \ref{Spf1}--\ref{Svar}.
Here we use the ideas of \cite{AHW} and our extension of them; 
we also use a very different type of matroidal argument,
combined with a martingale method,
to give a second proof of \refT{Th2} and to prove parts of the
variance results of \refT{Thvar}.

In \refS{Snonuniform} we give some results for
non-uniform cost distributions.
Theorems \ref{Tmono1} and \ref{Tmono2} extend
Theorems \ref{Th1} and \ref{Thcond} 
to any distributions with decreasing density functions.
In many probabilistic problems, exponential distributions yield simpler
results than uniform distributions, but  
this is not the case here:
Corollary~\ref{Cmono} and Theorem~\ref{Th1a}
give particular results for the exponential distribution
(which in this setting does not lead to equalities)
and to a Beta distribution $B(\ga,1)$
(which does).

We give some examples illustrating our theorems  in 
\refS{SEx}. In particular, \refE{EMST2} gives asymptotic results on the
expectation and variance of the VCG cost of a MST in $K_n$ with 
\iid{} $U(0,1)$ edge costs: 
the expectation is $\sim 2\zeta(3)$ as shown by \cite{CFMS}, and
the variance is $\sim (24\zeta(4)-18\zeta(3))/n$.
The other examples are much simpler, and serve partly as
counterexamples showing some ways in which our results cannot be improved.

\section{VCG mechanism, payment, and threshold}\label{SSVCG}

In order to define the VCG mechanism, we  introduce some
terminology. We consider, as in the example above, the case of a single
buyer and multiple independent sellers of different items.
(The buyer may also act as the auctioneer, 
or there may be a separate auctioneer, 
but in any case the two roles are equivalent for our purposes, and
we will speak of the buyer.)

A \emph{procurement} or \emph{reverse} auction instance $I=(A,\SS,c)$
consists of a set $A$ of \emph{items},
a \emph{collection $\SS$ of desired structures}, $\SS \subseteq 2^A$,
and a \emph{cost function} $c \colon A \rightarrow \Replus$
giving the {cost} $c(a)$ of each item $a$.
The buyer's goal is to obtain any structure $S \in \SS$,
paying as little as possible.
For any $S \in \SS$,
the (nominal) cost of structure $S$ is 
\begin{align}
\cideal(S) := 
 \sum_{a \in S} c(a)
  = \sum_{a \in A} c(a) \one{a \in S} . \label{idealcost} 
\end{align}
We use the phrase \emph{nominal cost}
to distinguish it from
the \emph{VCG cost} defined below.  
Let $\cstar$ be the minimum cost, 
\begin{align}
 \cstar = \cstar(I) := \min_{S \in \SS} \cideal(S),  \label{idealcostX}
\end{align}
and call any $S \in \SS$ achieving this minimum a \emph{minimum structure}.
(Explicitly showing $\cstar$ as a function of the instance $I$ will 
sometimes be important, for example in \eqref{defcstar}.)

The  VCG mechanism purchases 
a minimum structure $\Sstar$, just as a simple auction would.
(If there are several minimum structures, 
it chooses one arbitrarily. 
In the random setting that we will 
consider in our main results, the minimum structure is unique with
probability~1, so there is no issue,
but we will be precise about this throughout.)
We say that the items $a \in \Sstar$ are \emph{selected}.
No payment is made for any item not selected.
For each selected item, the owner is paid an amount 
at least as large as the nominal cost, and typically larger.
This payment, the \emph{VCG cost},
is defined as follows.

Given any instance $I$, for each item $a \in A$
define an instance $\Iz a$ to be identical to $I$ except that $c(a)=0$.
Likewise, let $\Iinf a$ be identical to $I$ except that $c(a)=\infty$.
Define the \emph{incentive payment} for item $a$ as
\begin{align}
 \cinc(a) := \cstar(\Iinf a) - \cstar(I) 
\ge0,
\label{defcstar}
\end{align}
and the \emph{VCG threshold} for item $a$ as
\begin{align}\label{nota}
\vcgt{I}{a}
:=
 \cstar(\Iinf a) - \cstar(\Iz a).
\end{align}
(The names will soon be justified by \eqref{split-VCG} and 
\eqref{ifin}--\eqref{ifnot}.)
The payment for item $a$ under the VCG mechanism is given by:
\begin{equation}
\text{VCG payment to $a$} := 
\begin{cases}
    0 & \mbox{if $a \notin \Sstar$}\\
    \vcgt I a = c(a)\!+\!\cinc(a) & \mbox{if $a \in \Sstar$,} \label{split-VCG}
\end{cases}
\end{equation}
with the implicit equality justified by \eqref{vcgc+}.
Note  that by \eqref{nota}, the VCG threshold for $a$,
and thus the VCG payment if $a$ is selected,
depends only on the costs of other items, not on $c(a)$.

Let $\Sstar$ be a minimum structure. 
If $a \in \Sstar$ then
$\cstar(I)-\cstar(\Iz a)=c(a)$, 
because if $a$ is already part of the minimum structure $\Sstar$,
then decreasing its cost to 0 means $\Sstar$ is still a minimum structure,
and the cost of $\Sstar$ decreases by $c(a)$.
Then, by \eqref{nota} and \eqref{defcstar},
\begin{align} 
\ta
&=
 \cstar(\Iinf a) - \cstar(\Iz a)
\notag \\&=
 [\cstar(I) - \cstar(\Iz a)] + [\cstar(\Iinf a) - \cstar(I)]
\notag \\ &=
c(a)+\cinc(a)  \label{vcgc+}
\\ & \geq
c(a) \label{VCGpayment}
 .
\end{align}
If $a\notin\Sstar$, then 
$\cstar(\Iinf a)= \cstar(I)$ 
so $\cinc(a)=0$;
moreover, $\cstar(I)\le \cstar(\Iz a)+c(a)$, 
and hence
\begin{equation}  \label{cvcgnot}
\vcgt{I}{a}
= \cstar(\Iinf a)- \cstar(\Iz a) 
= \cstar(I)- \cstar(\Iz a) \le c(a).
\end{equation}

Recapitulating, if $\Sstar$ is a minimum structure,
\begin{align}
a \in \Sstar 
\implies 
\cinc(a) \geq 0  
& \text{ and }
\vcgt{I}{a} =c(a)+\cinc(a) \geq c(a) ,
\label{ifin}
\\
a \notin \Sstar 
\implies 
\cinc(a) =0  
& \text{ and }
\vcgt{I}{a} \leq c(a).
\label{ifnot}
\end{align}

If $\Sstar$ is a unique minimum structure, we can say a bit more. 
In this case, if $a\in\Sstar$, then $\cstar(\Iinf a) >\cstar(I)$ so
$\cinc(a)>0$
and  there is strict inequality in \eqref{VCGpayment}.
And if $a\notin\Sstar$, then 
$\cstar(\Iz a)+c(a)>c(\Sstar)=\cstar(I)$, 
and thus there is strict inequality
in \eqref{cvcgnot}.
That is, if $\Sstar$ is unique then we cannot have equality in 
\eqref{ifin} nor in \eqref{ifnot}, so that
\begin{align}
\begin{aligned}
a \in \Sstar 
\iff 
\cinc(a) > 0  
& \iff
\vcgt{I}{a} 
\ge c(a) 
\\ & \iff
\vcgt{I}{a} > c(a) .
\end{aligned}
\label{useconditions}
\end{align}

If follows that any minimum structure $\Sstar$ 
contains every item $a$ with $c(a)<\ta$,
and no item with $c(a)>\ta$;
this justifies calling $\ta$ the VCG threshold.
Furthermore, 
$\Sstar$ is unique if and only if there are no
items with $c(a)=\ta$. 
(It is easily seen that 
if there is any such item then there is at least one
minimum $\Sstar$ containing $a$ and another without $a$.
Recall that 
in our continuous random setting, minimum structures will be unique with
probability~1.)

As said above, 
the VCG payment for a selected item $a$
depends only on the costs of other items, not on $c(a)$.
This  leads to the \emph{truthfulness} property of the VCG mechanism
stated earlier: 
it is always optimal for the owner of item $a$ to set its declared price,
$c(a)$, equal to the item's true value to him, call it $\cpriv(a)$.
If $\cpriv(a) < \ta$
then the owner will be happy to receive payment of $\ta$
and therefore should declare some price $c(a)<\ta$.
If on the other hand 
$\cpriv(a) > \ta$,
then the owner would rather not sell,
and therefore should declare some price $c(a)>\ta$.
In either case, 
setting $c(a)=\cpriv(a)$ will work.
Since the owner does not know $\ta$, declaring a cost 
$c(a) \neq \cpriv(a)$ merely risks 
the loss of a profitable sale 
(if $\cpriv(a) < \ta < c(a)$)
or a disadvantageous sale
(if $\cpriv(a) > \ta > c(a)$).

Let us now look at the total amount paid by the
VCG mechanism, which is the sum over all selected items, 
items $a \in \Sstar$, of the payment for $a$ given by \eqref{split-VCG}.
Let us start with the incentive payments $\cinc(a)$.
We use the common notation that $x_+ :=x$ 
if $x>0$, and 0 otherwise.
Define the \emph{VCG overpayment} 
for any minimum structure $\Sstar$
by 
\begin{multline}
\cinc(\Sstar)
:= \sum_{a\in\Sstar}\cinc(a)
\\
=\sum_{a \in A}\cinc(a) 
=   \sum_{a \in A} \bigpar{\vcgt I a-c(a)}_+
=: \cinc(I)
\label{vcgtot}
.  
\end{multline}
The second 
inequality relies on \eqref{ifnot},
and the third on \eqref{ifnot} and \eqref{ifin}.
The final quantity $\cinc(I)$ is well-defined 
because the middle expressions do not depend on $\Sstar$.
That is, the VCG overpayment is a function of the instance: 
it is the same for any minimum structure $\Sstar$.
Similarly, define the \emph{total VCG cost} by
\begin{multline}
 \cVCG(\Sstar) 
:= \sum_{a \in\Sstar}  \vcgt I a
  = \sum_{a \in\Sstar}\bigpar{c(a)+\cinc(a)} 
 \\
  = \cstar + \cinc(\Sstar)
  = \cstar + \cinc(I)
  =: \cVCG(I) ;
  \label{vcgtot0}
\end{multline}
again, $\cVCG(I)$ is well-defined because
neither $\cinc(\Sstar)$ nor $\cstar$ depends on $\Sstar$.

\section{Main results}\label{Smain}
Our main results concern the expectations of the costs $\cstar$ and $\cVCG$
when the item costs are independent random values.
In particular, we are interested in the case when the costs are uniformly
distributed on some intervals $[0,d_a]$;
in this case we have the following general result.

\begin{theorem} \label{Th1}
In the general VCG setting, if the costs $c(a)$ 
are independent uniform random variables, $c(a) \sim U(0,d_a)$,
then
\begin{align}\label{th1}
 \E(\cVCG) \geq 2 \E(\cstar) .
\end{align}
\end{theorem}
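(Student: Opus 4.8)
The plan is to start from the identity $\cVCG = \cstar + \cinc(I)$ of \eqref{vcgtot0}, which reduces \eqref{th1} to the statement $\E(\cinc(I)) \ge \E(\cstar)$. Following the idea of \cite{AHW}, I would exploit the fact noted after \eqref{split-VCG} that the threshold $\vcgt{I}{a}$ is a function of the costs $c(b)$, $b \ne a$, alone. By \eqref{vcgtot}, $\cinc(I) = \sum_{a\in A}\bigpar{\vcgt{I}{a} - c(a)}_+$; and since the costs are continuous the minimum structure $\Sstar$ is a.s.\ unique, so by \eqref{useconditions} we have, a.s., $\cstar = \sum_{a\in A} c(a)\,\one{c(a) < \vcgt{I}{a}}$. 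It therefore suffices to prove, for each $a$,
\begin{align}
  \E\bigsqpar{\bigpar{\vcgt{I}{a} - c(a)}_+} \;\ge\; \E\bigsqpar{c(a)\,\one{c(a) < \vcgt{I}{a}}} .
  \label{plan-key}
\end{align}

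To prove \eqref{plan-key}, condition on all the costs $c(b)$ with $b \ne a$; then $t := \vcgt{I}{a}$ is a constant while $X := c(a) \sim U(0,d_a)$. The pointwise identity
\begin{equation*}
  (t-X)_+ - X\,\one{X<t} = (t - 2X)\,\one{X<t}
\end{equation*}
reduces the claim to nonnegativity of the conditional expectation of the right-hand side, and the elementary integral $\frac1{d_a}\int_0^{\min(t,d_a)}(t-2x)\,dx$ equals $0$ when $t \le d_a$ and $t-d_a$ when $t > d_a$, so
\begin{equation*}
  \E\bigsqpar{(t-2X)\,\one{X<t} \bigcond (c(b))_{b\ne a}} = (t - d_a)_+ \ge 0 .
\end{equation*}
Taking expectations gives \eqref{plan-key}, and summing over $a\in A$ yields $\E(\cinc(I)) - \E(\cstar) = \sum_{a\in A}\E\bigsqpar{\bigpar{\vcgt{I}{a} - d_a}_+} \ge 0$, i.e.\ \eqref{th1}.

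I do not expect a serious obstacle: the argument is short once the decomposition and the conditioning are set up. The only points that need care are the measure-theoretic bookkeeping --- checking that $\Sstar$ is a.s.\ unique and that the event $\{a\in\Sstar\}$ may be replaced a.s.\ by $\{c(a) < \vcgt{I}{a}\}$ via \eqref{useconditions} --- and, in the integration, not forgetting that $c(a)$ is supported on $[0,d_a]$, so both expectations in \eqref{plan-key} are truncated at $d_a$; this truncation is precisely what produces the slack $(\vcgt{I}{a} - d_a)_+$. It also shows in advance that equality in \eqref{th1} forces $\vcgt{I}{a} \le d_a$ almost surely for every $a$, which is the condition to be exploited in the matroid case of \refT{Th2}.
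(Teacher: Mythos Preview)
Your proof is correct and takes essentially the same approach as the paper: both decompose the costs as sums over items, condition on the costs $c(b)$, $b\ne a$, so that the threshold $\vcgt{I}{a}$ becomes a constant, and then use the uniform law of $c(a)$ to compare the two terms. The only cosmetic difference is that the paper compares $\vcgt{I}{a}\,\one{c(a)\le \vcgt{I}{a}}$ directly against $c(a)\,\one{c(a)\le \vcgt{I}{a}}$, whereas you first write $\cVCG=\cstar+\cinc$ and compare $\cinc(a)=(\vcgt{I}{a}-c(a))_+$ against $c(a)\,\one{c(a)<\vcgt{I}{a}}$; the underlying computation is the same, and your explicit slack $(\vcgt{I}{a}-d_a)_+$ matches the paper's equality criterion in \refR{cap1}.
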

That is, the expected VCG cost is at least twice as large as the
nominal cost of the minimum structure:
the VCG mechanism is overpaying significantly.
While this is the natural way of thinking of the conclusion,
at a technical level our proofs, and results such as \refT{Thcond} below,
suggest that it is perhaps better thought of as
$\E(\cstar) \leq \tfrac12 \E(\cVCG)$.

We obtain further results in a \emph{matroid} setting where
$\SS$ is the family of \emph{bases} of a matroid $M$ on \emph{ground set} $A$.
We recall that there are many equivalent definitions of matroids
and many examples of different types; see e.g.\ \cites{Welsh,White}. 
One important example of the matroid case is
the motivating spanning-tree example at the beginning of the paper,
which translates to matroid language as follows. 
(All the ideas in the paper can be appreciated
by thinking of just this example,
but working with matroids gives greater generality,
and highlights the restricted set of properties needed.)

\begin{example}\label{EMST}
$G$ is a connected graph.  $M$, 
the \emph{graph matroid} (or \emph{cycle matroid}) of $G$, 
has \emph{ground set} $A=E(G)$, the edges of $G$,
and its {bases} are the spanning trees of $G$.
A minimum-cost structure is then a minimum-cost basis,
namely a minimum-\weight spanning tree (MST).
\end{example}

Let $M$ be a matroid with ground set $A$.
For any subset $S \subseteq A$, let $\rk(S)$ denote the \emph{rank} of $S$.
An element $a \in S$ is called a \emph{bridge} in $S$ if
$\rk(S \setminus a) < \rk(S)$.
(When $M$ is a graph matroid,
this definition conforms with that of a bridge in a graph.)
Let
\begin{equation}
  \label{bridges}
\b(S) := |\set{a \in S \colon \rk(S \setminus a) < \rk(S)}| 
\end{equation}
be the number of bridges in $S$.
The bridges of $M$ are those of its ground set $A$,
and $M$ is bridgeless if there is no such element.
(We will use the terms ``element'' and ``item'' interchangeably,
favoring element in the matroid context, 
and item in the auction context.)

For bridgeless
matroids using a common uniform distribution of the item costs,
we have exact equality in \refT{Th1}.
(We have to exclude bridges, since a bridge belongs to every minimum
structure and thus its VCG payment is $\infty$, so the total VCG cost is
infinite for a matroid with bridges, \cf{} \eqref{m2} below.)

\begin{theorem} \label{Th2}
For a bridgeless
matroid with costs $c(a)$ \iid{} uniform random variables, 
$c(a) \sim U(0,1)$,
\begin{align*}
 \E(\cVCG) = 2 \E(\cstar) .
\end{align*}
\end{theorem}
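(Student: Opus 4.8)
The plan is to refine the inequality of \refT{Th1} into an equality by exploiting the matroid structure. Recall from \eqref{vcgtot0} that $\cVCG(I) = \cstar(I) + \cinc(I)$, so it suffices to show $\E(\cinc) = \E(\cstar)$, i.e.\ that the expected total overpayment equals the expected nominal cost. From \eqref{vcgtot} the overpayment is $\cinc(I) = \sum_{a\in A}\bigpar{\vcgt I a - c(a)}_+$, and since the costs are \iid{} $U(0,1)$ and the threshold $\vcgt I a$ depends only on the costs of the items other than $a$, we can condition on $\xma := (c(b))_{b\ne a}$: writing $t_a := \vcgt I a$ for the (random, but $\xma$-measurable) threshold, we have $\E\bigsqpar{(t_a - c(a))_+ \cond \xma} = \intoi (t_a - s)_+\,ds = \tfrac12 (t_a\wedge 1)^2 = \tfrac12 t_a^2$ when $0\le t_a\le 1$ (and one must check that $t_a\le 1$ a.s.\ for a bridgeless matroid, or truncate appropriately). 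Likewise, for the nominal cost, $a\in\Sstar$ iff $c(a) < t_a$ by \eqref{useconditions}, so the contribution of $a$ to $\cstar$ is $c(a)\one{a\in\Sstar}$, and $\E\bigsqpar{c(a)\one{c(a)<t_a}\cond\xma} = \int_0^{t_a} s\,ds = \tfrac12 t_a^2$. Thus \emph{conditionally on $\xma$}, item $a$ contributes the same expected amount, $\tfrac12 t_a^2$, to $\cinc$ as to $\cstar$; summing over $a\in A$ and taking total expectations gives $\E(\cinc) = \E(\cstar)$, hence $\E(\cVCG) = 2\E(\cstar)$.

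The step that needs the matroid hypothesis — and which I expect to be the main obstacle — is justifying that $0 \le \vcgt I a \le 1$ holds almost surely when $M$ is bridgeless. The lower bound $\vcgt I a \ge 0$ is immediate (it is $\cstar(\Iinf a) - \cstar(\Iz a)$ and raising a cost cannot decrease $\cstar$). For the upper bound one uses the exchange property of matroids: given a minimum basis $\Sstar$ of $\Iz a$ (where $c(a)=0$, so $a\in\Sstar$ may be assumed), since $M$ is bridgeless $a$ is not a bridge of $A$, so $A\setminus a$ still has full rank and contains a basis; by the basis exchange axiom there is some $b\notin\Sstar$ with $(\Sstar\setminus a)\cup b$ a basis, and this basis is feasible for $\Iinf a$, giving $\cstar(\Iinf a) \le \cstar(\Iz a) - 0 + c(b) \le \cstar(\Iz a) + 1$ since $c(b)\le 1$. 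Hence $\vcgt I a \le 1$. (Without bridgelessness, if $a$ is a bridge then $\cstar(\Iinf a)=\infty$ and the threshold — and the VCG cost — is infinite, which is why the hypothesis is needed.) One should double-check the boundary/uniqueness issues: the minimum basis is unique with probability $1$ in the continuous setting, so the "may be assumed $a\in\Sstar$" and the strict-vs-weak inequalities cause no trouble after passing to expectations.

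Putting the pieces together: for each $a\in A$,
\begin{equation*}
\E\bigsqpar{(t_a - c(a))_+} = \E\bigsqpar{\tfrac12 t_a^2} = \E\bigsqpar{c(a)\one{a\in\Sstar}},
\end{equation*}
where the first and last equalities come from the conditioning computation above (valid because $0\le t_a\le 1$ a.s.) and the middle one is a tautology. Summing over $a$ and using \eqref{vcgtot} and \eqref{idealcost} yields $\E(\cinc) = \E(\cstar)$, and then \eqref{vcgtot0} gives $\E(\cVCG) = \E(\cstar) + \E(\cinc) = 2\E(\cstar)$, as claimed. I would remark that this argument is essentially the \cite{AHW} argument of \refS{Spf1} specialized to the case where the relevant truncation $(t_a\wedge 1)$ is simply $t_a$, which is exactly what bridgelessness buys us; a second, independent proof via the matroidal/martingale method is promised in the outline and presumably appears later.
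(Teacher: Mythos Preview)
Your proof is correct and essentially identical to the paper's first proof of \refT{Th2}: both condition on the other costs $\xma$, use the bridgeless matroid's basis-exchange property to show $\vcgt{I}{a}\le 1$, and then observe that this makes the per-item conditional expectations match (you compare $\cinc$ with $\cstar$ term-by-term, the paper compares $\cVCG$ with $\cstar$, but this is only a cosmetic repackaging since $\cVCG=\cstar+\cinc$). The paper also gives a genuinely different second proof via \refL{L1} and an integration-by-parts argument, which you correctly anticipate in your closing remark.
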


\refT{Th2}'s special case of a MST in a complete graph
was proved by \cite{CFMS}.
That paper also showed 
that the expected VCG cost is asymptotically (but not exactly)
equal to twice the nominal cost
for a minimum-\weight path between a pair of vertices in a complete graph, 
and for a minimum-\weight assignment in a complete bipartite graph.
These are special cases of \refT{Th1} but not of \refT{Th2},
and the asymptotic equality is discussed further in \refR{Rcfms}.

Proofs of the theorems above are given in \refS{Spf1}.
\refS{Spf2} gives
a second proof of Theorem~\ref{Th2} using the following
formulas of independent interest
for the nominal and VCG costs in the matroid case.
For $t \geq 0$ let 
\begin{equation}
A(t) := \set{a \in A \colon c(a) \leq t}  
\end{equation}
be the set of items of cost at most $t$.
(Recall that costs are by definition non-negative.) 

\begin{lemma}\label{L1}
Let $M$ be a matroid with arbitrary (random or non-random) costs $c(a)$.
Then
\begin{align}
 \cstar(M) &= \int_0^\infty \big(\rk(A)-\rk(A(t))\big) \, dt  \label{m1}
 \\
 \cVCG(M) &= \cstar(M) + \int_0^\infty \b(A(t)) \, dt .  \label{m2}
\end{align}
\end{lemma}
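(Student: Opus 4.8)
The plan is to prove both formulas by analyzing the greedy algorithm for matroids, which computes a minimum-weight basis by scanning elements in increasing cost order. First I would establish \eqref{m1}. Recall that for a matroid $M$ with ground set $A$, the minimum basis cost can be obtained from the rank function via the identity $\cstar(M) = \sum_{a} c(a)\one{a \text{ chosen by greedy}}$, and by the layer-cake / Fubini device, $\cstar(M) = \intoo \rk_{\ge t}\,dt$ where $\rk_{\ge t}$ is the number of chosen elements with cost exceeding $t$. The key matroidal fact is that the number of elements of cost at most $t$ that the greedy algorithm selects is exactly $\rk(A(t))$: the greedy basis restricted to $A(t)$ is a maximal independent subset of $A(t)$, hence has size $\rk(A(t))$. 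Therefore the number of selected elements of cost $> t$ is $\rk(A) - \rk(A(t))$, and integrating gives \eqref{m1}. I should note that this works for arbitrary non-negative costs (ties broken arbitrarily) since the value $\rk(A(t))$ does not depend on the tie-breaking.

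Next, for \eqref{m2}, by \eqref{vcgtot0} it suffices to show $\cinc(M) = \intoo \b(A(t))\,dt$. Using \eqref{vcgtot}, $\cinc(M) = \sum_{a\in\Sstar}\cinc(a)$ where $\cinc(a) = \cstar(\Iinf a) - \cstar(M)$. The matroid $M$ with element $a$ deleted (cost $\infty$ means $a$ is never used) is the matroid $M \setminus a$, and I would apply \eqref{m1} to both $M$ and $M\setminus a$: the integrand becomes $\rk_{M\setminus a}(A\setminus a) - \rk_{M\setminus a}((A\setminus a)(t)) - \rk_M(A) + \rk_M(A(t))$. When $a$ is selected (so $a$ is a bridge of $\Sstar$ but we must be careful — $a\in\Sstar$ need not make $a$ a bridge of $A(t)$), the rank drop $\rk_M(A) - \rk_{M\setminus a}(A\setminus a)$ is $1$ if $a$ is a bridge of $A$ and $0$ otherwise; similarly $\rk_M(A(t)) - \rk_{M\setminus a}((A\setminus a)(t))$ is $\one{a \text{ is a bridge of } A(t)}$ when $c(a)\le t$ and $0$ when $c(a) > t$ (since then $a \notin A(t)$ and deleting it changes nothing). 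For a selected item $a$ (not a bridge of $M$, as we assumed $M$ bridgeless), this gives $\cinc(a) = \intoo \one{a\in A(t),\ a \text{ bridge of } A(t)}\,dt$. Summing over all $a$ (unselected $a$ contribute $\cinc(a)=0$ by \eqref{ifnot}, consistent with the formula since an unselected $a$ is never a bridge of any $A(t)$ — it lies in a circuit within $A(t)$ once $a\in A(t)$) and exchanging sum and integral yields $\cinc(M) = \intoo \b(A(t))\,dt$.

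The main obstacle I anticipate is handling the bookkeeping around bridges cleanly — specifically verifying the claim that an item $a$ unselected by greedy is never a bridge of $A(t)$ for any $t$, and conversely that $\b(A(t)) = \sum_{a\in\Sstar}\one{a\in A(t),\ a\text{ bridge of }A(t)}$, i.e. every bridge of $A(t)$ is selected. This follows because a bridge of $A(t)$ lies in every maximal independent subset of $A(t)$, in particular in $\Sstar \cap A(t)$. I would also need to state the lemma's validity for matroids with bridges: \eqref{m2} still holds with both sides $+\infty$, since a bridge $a$ of $M$ is a bridge of $A(t)$ for all $t \ge c(a)$, making the integral diverge, matching $\cinc(a) = \cstar(\Iinf a) = \infty$. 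Everything else is the routine layer-cake computation and standard rank-function identities for matroid deletion.
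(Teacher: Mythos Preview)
Your proposal is correct and follows essentially the same approach as the paper: both arguments prove \eqref{m1} via the greedy algorithm plus the layer-cake identity (using that the greedy basis meets $A(t)$ in exactly $\rk(A(t))$ elements), and prove \eqref{m2} by applying \eqref{m1} to the instance with $a$ excluded and summing. The paper's bookkeeping is marginally simpler---it applies \eqref{m1} to $M^\infty_a$ rather than to the deleted matroid $M\setminus a$, and sums $\cinc(a)$ over all $a\in A$ so that $\sum_a\bigl(\rk(A(t))-\rk(A(t)\setminus a)\bigr)=\beta(A(t))$ falls out directly from the definition of a bridge, with no need to check separately that every bridge of $A(t)$ lies in $\Sstar$---but the substance is the same.
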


\refT{Th2} concerns the overall expectations of $\cVCG$ and $\cstar$.
In \refS{Scond2} we use 
a matroidal extension of the ideas in the first proof of \refT{Th2} to
show the following result on the conditional expectation of
$\cstar$ given $\cVCG$; note that it is a considerable strengthening of
\refT{Th2} (which follows by taking the expectation).

\begin{theorem}
  \label{Thcond}
For a bridgeless
matroid with costs $c(a)$ \iid{} uniform random variables, 
$c(a) \sim U(0,1)$,
\begin{equation}\label{thcond}
  \E\bigpar{\cstar\mid\cVCG} = \tfrac12\cVCG.
\end{equation}
\end{theorem}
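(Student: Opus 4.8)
The plan is to exploit the integral formulas of \refL{L1} together with a symmetrization/exchangeability argument on the ranks of the level sets $A(t)$. Write $\cstar = \int_0^\infty \bigpar{\rk(A)-\rk(A(t))}\,dt$ and $\cinc(I) = \int_0^\infty \b(A(t))\,dt$, so that $\cVCG = \cstar + \cinc(I)$. Since $\E(\cstar\mid\cVCG)=\tfrac12\cVCG$ is equivalent to $\E(\cstar\mid\cVCG)=\E(\cinc(I)\mid\cVCG)$, i.e.\ to $\E\bigpar{\cstar-\cinc(I)\mid\cVCG}=0$, it suffices to find a measure-preserving involution on the cost space that fixes $\cVCG$ while swapping (in distribution, conditionally) the roles of $\cstar$ and $\cinc(I)$. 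The natural candidate, in the spirit of the \cite{AHW} argument used for \refT{Th2}, is the map $c(a)\mapsto 1-c(a)$ applied to all items, or rather a more refined version acting level-set by level-set: the point is that for a bridgeless matroid the ``deficiency'' $\rk(A)-\rk(A(t))$ and the ``bridge count'' $\b(A(t))$ are exchanged under complementation of the threshold, because an element is a bridge of $A(t)$ exactly when it is needed to keep the rank up, which is a co-loop condition dual to the condition defining the deficiency of the complementary level set.

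The key steps, in order, are: (i) record from \refL{L1} the two integral representations and reduce the theorem to showing $\E\bigpar{\cstar - \cinc(I)\bigcond \cVCG}=0$; (ii) set up the conditioning correctly — condition not merely on the scalar $\cVCG$ but argue via a finer $\sigma$-algebra, for instance the one generated by $\cVCG$ together with the unordered multiset of costs, or better, the one under which the reflection $c\mapsto 1-c$ is available, and check that $\cVCG(I)$ is invariant under this reflection (this uses bridgelessness: with bridges present $\cVCG=\infty$ and the reflection is degenerate, which is why the hypothesis is needed); (iii) show that under the reflection, $\cstar$ is mapped to $\cinc(I)$ and vice versa, which amounts to the matroid-duality identity $\int_0^\infty\bigpar{\rk(A)-\rk(A(t))}\,dt \leftrightarrow \int_0^\infty \b\bigpar{A(1-t)}\,dt$ together with $\cVCG$-invariance, so their conditional expectations coincide; (iv) conclude $\E(\cstar\mid\cVCG) = \tfrac12(\cstar+\cinc(I)) = \tfrac12\cVCG$ after taking conditional expectations, and finally take the outer expectation to recover \refT{Th2} as a sanity check.

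The main obstacle I expect is step (ii)–(iii): identifying precisely the right conditioning $\sigma$-algebra (or the right pairing of cost configurations) so that the involution both preserves the conditioning data and swaps $\cstar$ with $\cinc(I)$. The naive global reflection $c\mapsto 1-c$ preserves the uniform measure and does fix certain symmetric functionals, but it is not obvious that it fixes $\cVCG$ — indeed $\cVCG$ is not a symmetric function of the costs in the naive sense, and $\rk(A(t))$ under $c\mapsto 1-c$ becomes $\rk\bigpar{\{a: c(a)\ge 1-t\}}$, which is the rank of a \emph{co-level} set rather than a level set. Making the bookkeeping work will likely require viewing the level-set process $t\mapsto \rk(A(t))$ as a concave increasing step function determined by the order statistics of the costs and the matroid's rank sequence, observing that bridgelessness forces this function to have no initial flat piece of the relevant type at $t=1$, and then checking that the reflected configuration has the same $\cVCG$ because the bridge integral of one equals the deficiency integral of the other. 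An alternative, possibly cleaner route — and the one the phrase ``matroidal extension of the ideas in the first proof of \refT{Th2}'' hints at — is to avoid a single global involution and instead build the conditional distribution of $\cstar$ given $\cVCG$ directly by a martingale/filtration argument that reveals the costs in increasing order, showing at each step that the conditional law is symmetric about the current midpoint between the already-committed nominal cost and the eventual $\cVCG$; the hard part there becomes verifying the matroid-exchange property that keeps this symmetry intact as each new element is either absorbed into the optimal basis or contributes to a bridge at its threshold.
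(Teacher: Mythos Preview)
Your proposal has a genuine gap: the global reflection $c\mapsto 1-c$ does \emph{not} fix $\cVCG$, and the hoped-for duality between rank deficiency and bridge count under this reflection does not hold. A small check already settles it: for the uniform matroid $\mathsf U_{3,2}$ (equivalently the cycle matroid of $K_3$) with costs $(0.1,0.2,0.3)$ one has $\cVCG=2U\subb3=0.6$, while after reflection the costs are $(0.9,0.8,0.7)$ and $\cVCG=1.8$. More generally, under $c\mapsto 1-c$ the level set $A(t)$ becomes the \emph{complement} of $A(1-t)$, and there is no identity relating $r(A)-r(A\setminus S)$ to $\b(S)$ for arbitrary $S$; for instance if $S=\{a\}$ with $a$ not a loop and not a bridge of $A$, the former is $0$ while the latter is $1$. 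So step~(iii) of your plan cannot be carried out as stated, and you correctly flag step~(ii)--(iii) as the obstacle without resolving it. Your fallback martingale sketch is too vague to repair this.

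The paper's route is quite different and, in a sense, identifies the ``right'' involution you are groping for --- but it is a \emph{local} one, not the global reflection. The key new ingredient is \refL{Lmom5}: for any independent set $F\subseteq\Sstar$, the VCG thresholds $\cVCG(a)$, $a\in F$, coincide with the extended thresholds $\vv(F,a)$, which by construction depend only on the costs \emph{outside} $F$. Taking $F=\Sstar$, this says the entire vector $(\cVCG(a))_{a\in\Sstar}$ is determined by $c\vert_{A\setminus\Sstar}$; combined with the characterization $\{F\subseteq\Sstar\}\iff\{c(a)<\vv(F,a)\ \forall a\in F\}$ (a.s.), one obtains \refC{cmom2}: conditioned on $\Sstar$ and on $(\cVCG(a))_{a\in\Sstar}$, the costs $c(a)$, $a\in\Sstar$, are independent with $c(a)\sim U(0,\cVCG(a))$. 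From here $\E(\cstar\mid\cG)=\tfrac12\cVCG$ is immediate, and one relaxes to conditioning on $\cVCG$ alone. In involution language, the map that works is $c(a)\mapsto\cVCG(a)-c(a)$ for $a\in\Sstar$ and $c(a)\mapsto c(a)$ otherwise; this fixes $\cVCG$ and swaps $\cstar$ with $\cinc$, but knowing that it is measure-preserving and well-defined is exactly the content of \refL{Lmom5}, whose proof (a careful greedy-algorithm exchange argument showing $\BB(\set a,a)=\BB(F,a)=(\Sstar\setminus a)\cup\set b$) is the real work you are missing.
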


As an immediate corollary, we obtain the following formula for some mixed
moments of $\cstar$ and $\cVCG$.

\begin{corollary}\label{Cmom}
For a bridgeless
matroid with costs $c(a)$ \iid{} uniform random variables, 
$c(a) \sim U(0,1)$, and any real $m\ge0$,
\begin{align}\label{cmom}
\E\bigpar{\cstar(\cVCG)^m}=\tfrac12\E\bigpar{(\cVCG)^{m+1}}.
\end{align}  
\end{corollary}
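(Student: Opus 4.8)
The plan is simply to condition on $\cVCG$ and invoke \refT{Thcond}. First I would note that $(\cVCG)^m$ is a measurable function of the random variable $\cVCG$, hence may be taken outside the conditional expectation given $\cVCG$; together with \eqref{thcond} this gives
\[
\E\bigpar{\cstar\,(\cVCG)^m \cond \cVCG}
= (\cVCG)^m\,\E\bigpar{\cstar\cond \cVCG}
= \tfrac12\,(\cVCG)^{m+1}.
\]
Taking expectations of both sides and applying the tower property $\E\bigpar{\E(X\cond\cVCG)}=\E X$ then yields \eqref{cmom} directly.

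The one point that should be checked is that every expectation appearing is finite, so that these manipulations are legitimate. This is immediate: since each $c(a)\in[0,1]$ we have $0\le\cstar\le|A|$; moreover $A(t)=A$ for all $t\ge 1$ and the matroid $M$ (on ground set $A$) is bridgeless, so $\b(A(t))=0$ for $t\ge1$, and \eqref{m2} together with $\b(A(t))\le\rk(A)\le|A|$ gives $\cVCG=\cstar+\int_0^1\b(A(t))\,dt\le 2|A|$. Thus $\cstar$ and $\cVCG$ are bounded random variables, and in particular $\E\bigpar{\cstar(\cVCG)^m}$ and $\E\bigpar{(\cVCG)^{m+1}}$ are finite for every real $m\ge0$.

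I do not anticipate any real obstacle; the corollary is a formal consequence of \refT{Thcond}, the only subtlety being the (trivial) integrability check above.
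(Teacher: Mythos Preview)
Your proof is correct and follows exactly the paper's approach: the paper's own proof is the single line ``Multiply \eqref{thcond} by $(\cVCG)^m$ and take the expectation,'' which is precisely what you do, spelled out with the tower property and an added (correct but not strictly necessary) integrability check.
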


\begin{proof}
  Multiply \eqref{thcond} by $(\cVCG)^m$ and take the expectation.
\end{proof}

These results do not generalize to higher powers of $\cstar$.
\refE{EU} shows that, given $\cVCG$,
the conditional distribution of $\cstar$ is not determined,
although by \refT{Thcond} its conditional expectation is.
The same example shows that 
$\E((\cstar)^2)$ and $\E((\cVCG)^2)$
are not in fixed proportion, i.e.,
\refT{Th2} does not generalize to the second moment; see also \refT{T22}.
Similarly, while the power of $\cVCG$ in \refC{Cmom} is arbitrary, 
it can be seen from the example that there is no similar
formula involving the second or higher power of $\cstar$.
Nevertheless, there are some general relations between the variances and
covariances of $\cstar$ and $\cVCG$, and we prove the following in 
\refS{Svar}.
The proof will use \refT{T22}
(itself proved with the first of our two methods for proving \refT{Th2})
and \refL{L2} (proved with the second method).

\begin{theorem}
  \label{Thvar}
For a bridgeless
matroid with costs $c(a)$ \iid{} uniform random variables, 
$c(a) \sim U(0,1)$,
\begin{align}
\Cov(\cstar,\cVCG)&=\frac12\Var(\cVCG). \label{cov}
\\
  \Var(\cVCG) &= 4\Var(\cstar) - \Var(\cVCG-2\cstar), \label{vvcg}
\\
\Var(\cstar) &=\intoi\intoi\Cov\bigpar{r(A(s)),r(A(t))}\,ds\,dt, \label{vcstar}
\\
\Var\bigpar{\cVCG-2\cstar}
&=
\int_0^1 \bigpar{\rk(A)-\E\rk(A(t))}\, 2t\, dt  \label{v1}
\\&=
\int_0^1 \E(\b(A(t))) \,t\, dt  \label{v2}.
\end{align}
\end{theorem}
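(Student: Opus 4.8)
The relations \eqref{cov}, \eqref{vvcg} and \eqref{vcstar} come quickly from the earlier results. By \refC{Cmom} with $m=1$, $\E(\cstar\cVCG)=\tfrac12\E\bigpar{(\cVCG)^2}$, and by \refT{Th2}, $\E(\cstar)=\tfrac12\E(\cVCG)$; subtracting,
\[
\Cov(\cstar,\cVCG)=\E(\cstar\cVCG)-\E(\cstar)\E(\cVCG)=\tfrac12\E\bigpar{(\cVCG)^2}-\tfrac12(\E\cVCG)^2=\tfrac12\Var(\cVCG),
\]
which is \eqref{cov}; then \eqref{vvcg} follows by expanding $\Var(\cVCG-2\cstar)=\Var(\cVCG)-4\Cov(\cstar,\cVCG)+4\Var(\cstar)$ and inserting \eqref{cov}. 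For \eqref{vcstar}, the costs lie in $[0,1]$, so $A(t)=A$ for $t\ge1$ and \eqref{m1} reads $\cstar=\int_0^1\bigpar{\rk(A)-\rk(A(t))}\,dt$; since $\rk(A)$ is non-random and $\rk(A(t))$ is bounded, Fubini gives $\Var(\cstar)=\Var\bigpar{\int_0^1\rk(A(t))\,dt}=\int_0^1\int_0^1\Cov\bigpar{\rk(A(s)),\rk(A(t))}\,ds\,dt$.

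The substance is \eqref{v1}--\eqref{v2}, and here is how I would proceed. Since $\cinc(a)=\vcgt I a-c(a)$ for $a\in\Sstar$ and $\cinc(a)=0$ otherwise, \eqref{vcgc+} and \eqref{vcgtot0} give $\cVCG-2\cstar=\sum_{a\in A}\bigpar{\vcgt I a-2c(a)}\one{a\in\Sstar}$, which has mean $0$ by \refT{Th2}; hence $\Var(\cVCG-2\cstar)=\E\bigpar{(\cVCG-2\cstar)^2}$ is the double sum over $a,b\in A$ of $\E\bigsqpar{\bigpar{\vcgt I a-2c(a)}\bigpar{\vcgt I b-2c(b)}\one{a\in\Sstar}\one{b\in\Sstar}}$. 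The main step is that each off-diagonal term ($a\ne b$) vanishes. Fix $a\ne b$ and condition on $\mathcal F_{-a}:=\sigma\bigpar{(c(a'))_{a'\ne a}}$: then $\vcgt I a$ is a constant, lying in $[0,1]$ since $M$ is bridgeless (so $a$ has a replacement, of cost at most $1$), while $c(a)\sim U(0,1)$ is independent of $\mathcal F_{-a}$. On the event $\{a\in\Sstar\}=\{c(a)<\vcgt I a\}$ the almost surely unique minimum basis is $\Sstar=\{a\}\cup B$, where $B$ is the minimum basis of the contraction $M/a$ and thus a function of $\mathcal F_{-a}$ alone; consequently $\one{b\in\Sstar}=\one{b\in B}$ is $\mathcal F_{-a}$-measurable there, and so is $\vcgt I b$, because when $b\in\Sstar$ the threshold $\vcgt I b$ is the least cost of an element that can be swapped for $b$ in $\Sstar$, all such elements lying outside $\Sstar$ (hence $\ne a$) and being determined by $\Sstar$. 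So, on $\{a\in\Sstar\}$, $\bigpar{\vcgt I b-2c(b)}\one{b\in\Sstar}$ equals some $\mathcal F_{-a}$-measurable quantity $g_0$, and the conditional expectation of the off-diagonal term given $\mathcal F_{-a}$ is $g_0\int_0^{\vcgt I a}\bigpar{\vcgt I a-2x}\,dx=0$, since $\int_0^u(u-2x)\,dx=0$.

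Thus only the diagonal remains, $\Var(\cVCG-2\cstar)=\sum_{a\in A}\E\bigsqpar{\bigpar{\vcgt I a-2c(a)}^2\one{a\in\Sstar}}$, and conditioning on $\mathcal F_{-a}$ again makes each term equal to $\E\bigsqpar{\int_0^{\vcgt I a}(\vcgt I a-2x)^2\,dx}=\tfrac13\E\bigsqpar{(\vcgt I a)^3}$, so $\Var(\cVCG-2\cstar)=\tfrac13\sum_{a\in A}\E\bigsqpar{(\vcgt I a)^3}$. To see that the two integrals in \eqref{v1}--\eqref{v2} equal this, I use the pointwise refinements of \refL{L1}: by the greedy algorithm $\rk(A(t))=\bigabs{\{a\in\Sstar:c(a)\le t\}}$, whence (using $a\in\Sstar\iff c(a)<\vcgt I a$) $\rk(A)-\rk(A(t))=\bigabs{\{a\in\Sstar:c(a)>t\}}$, and $a$ is a bridge of $A(t)$ exactly when $c(a)\le t<\vcgt I a$, whence $\b(A(t))=\sum_{a\in A}\one{c(a)\le t<\vcgt I a}$. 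Therefore $\int_0^1 2t\bigpar{\rk(A)-\rk(A(t))}\,dt=\sum_{a\in\Sstar}\int_0^{c(a)}2t\,dt=\sum_{a\in\Sstar}c(a)^2$, and taking expectations (and conditioning on $\mathcal F_{-a}$, which gives $\E\bigsqpar{c(a)^2\one{a\in\Sstar}\mid\mathcal F_{-a}}=\tfrac13(\vcgt I a)^3$) yields $\int_0^1 2t\bigpar{\rk(A)-\E\rk(A(t))}\,dt=\tfrac13\sum_{a\in A}\E\bigsqpar{(\vcgt I a)^3}$, i.e.\ \eqref{v1}; likewise $\int_0^1 t\,\b(A(t))\,dt=\sum_{a\in\Sstar}\tfrac12\bigpar{(\vcgt I a)^2-c(a)^2}$, and taking expectations (using $\E\bigsqpar{(\vcgt I a)^2\one{a\in\Sstar}\mid\mathcal F_{-a}}=(\vcgt I a)^3$) gives $\int_0^1 t\,\E\bigpar{\b(A(t))}\,dt=\tfrac12\sum_{a\in A}\bigpar{\E[(\vcgt I a)^3]-\tfrac13\E[(\vcgt I a)^3]}=\tfrac13\sum_{a\in A}\E\bigsqpar{(\vcgt I a)^3}$, i.e.\ \eqref{v2}.

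The part I expect to be the real obstacle is the matroid structural claim behind the vanishing of the off-diagonal terms --- that on $\{a\in\Sstar\}$ neither $\one{b\in\Sstar}$ nor $\vcgt I b$ depends on the value of $c(a)$; once that is in hand, everything else is bookkeeping with \refL{L1} and Fubini. (Alternatively, the identity $\Var(\cVCG-2\cstar)=\tfrac13\sum_{a}\E[(\vcgt I a)^3]=\E\bigsqpar{\sum_{a\in\Sstar}c(a)^2}$ should also be reachable by a martingale computation along the filtration that reveals the costs in increasing order, which is presumably the route taken via \refL{L2}.)
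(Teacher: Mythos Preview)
Your treatment of \eqref{cov}, \eqref{vvcg} and \eqref{vcstar} is exactly the paper's. For \eqref{v1}--\eqref{v2} you take a genuinely different route, and it works.

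The paper does not compute $\Var(\cVCG-2\cstar)$ directly. Instead it first proves \refT{T22}, namely $\Var(\cstar)=\tfrac14\Var(\cVCG)+\tfrac14\,\E\bigl(\sum_{a\in\Sstar}c(a)^2\bigr)$, using \refC{cmom2} (the full conditional-independence statement with $F=\Sstar$), and combines this with \eqref{vvcg} to get $\Var(\cVCG-2\cstar)=\E\bigl(\sum_{a\in\Sstar}c(a)^2\bigr)$. That last expectation is then evaluated by \refL{L2} (and an integration by parts using \refL{Ldiff}) to give the two integrals.

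Your argument expands $(\cVCG-2\cstar)^2$ as a double sum and kills the off-diagonal terms by conditioning on $\cF_{-a}$, obtaining $\Var(\cVCG-2\cstar)=\tfrac13\sum_a\E\bigl[(\vcgt Ia)^3\bigr]$, and then identifies both integrals with this same quantity via the pointwise identities $\rk(A)-\rk(A(t))=|\{a\in\Sstar:c(a)>t\}|$ and $\b(A(t))=\sum_a\one{c(a)\le t<\vcgt Ia}$. This is correct; the bridge characterisation follows from the monotonicity (in $t$) of $\rk(A(t)\cup\{a\})-\rk(A(t)\setminus\{a\})$, which is a consequence of submodularity (compare the paper's \eqref{qk}). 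Your route is more elementary in that it only needs the $|F|=2$ case of the structural fact, and it bypasses \refT{T22} and \refL{L2} entirely, passing instead through the pivot quantity $\tfrac13\sum_a\E[(\vcgt Ia)^3]$. The paper's route has the advantage that its ingredients (\refT{T22}, \refL{L2}) are of independent interest and generalise to higher moments; indeed the Remark after \refT{T22} describes essentially your pairwise-conditioning approach as an alternative proof of that theorem.

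The ``obstacle'' you flag---that on $\{a,b\}\subseteq\Sstar$ neither $\one{b\in\Sstar}$ nor $\vcgt Ib$ depends on $c(a)$---is precisely \refL{Lmom5} specialised to $F=\{a,b\}$: by \eqref{lmom5}, $\vcgt Ib=\vv(\{a,b\},b)$, which by definition \eqref{vcgF} depends only on costs outside $\{a,b\}$. Your informal justification (the swap element for $b$ lies outside $\Sstar$, hence is not $a$) is the right intuition, and the paper's proof of \refL{Lmom5} via \eqref{obs2} is the clean way to make it rigorous.
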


The proof of \refT{Thcond} uses the following 
very general 
results of
independent interest; the proofs are given in \refS{Scond2}.
We extend the notations $\Iz a$ and $\Iinf a$ from \refS{SSVCG}.
Given an instance $I$ and two disjoint sets of items $F$ and $E$,
let $I^{0,\infty}_{F,E}$ be the instance $I$ modified so that all items in
$F$ have cost 0 and all items in $E$ have cost $\infty$; if $E$ or $F$ is
empty we write just $I^0_F$ and $I^\infty_E$, respectively.
In analogy with \eqref{nota}, for any independent set $F$
and any $a \in F$, define an \emph{extended VCG threshold} by
\begin{align}
  \vv(F,a) :=
  \cstar(I^{0,\infty}_{F\setminus\set a,a})
- \cstar(I^{0}_{F}).
 \label{vcgF}
\end{align}
Thus $\cVCG(a)=\vv(\set a,a)$, 
and $\vv(F,a)$ extends several key properties of the simple VCG threshold.
One is immediate from the definition:
for $a \in F$, $\vv(F,a)$ has value independent of the costs on $F$,
just as $\cVCG(a)$ is independent of the cost of $a$.
Other extended properties follow from the lemma below.

\begin{lemma}\label{Lmom5}
Consider a bridgeless 
matroid with ground set $A$ and arbitrary
costs $c(a)$, 
and let $\Sstar$ be a minimum basis.
Then
\begin{align}\label{lmom5}
F \subseteq \Sstar 
 & \implies (\forall a \in F) \colon
  \cVCG(a) = 
  \vv(F,a) .
\intertext{%
Furthermore,}
 F \subseteq \Sstar
 & \implies
(\forall a \in F) \colon 
 c(a) \leq \vv(F,a) ,
 \label{obs1y}
\\
 F \subseteq \Sstar
 & \impliedby
(\forall a \in F) \colon 
 c(a) < \vv(F,a) ,
 \label{obs1z}
\intertext{%
and if all $2^{|A|}$ sums of costs over different sets of elements
are distinct,
}
 F \subseteq \Sstar
 & \iff 
(\forall a \in F) \colon 
 c(a) \leq \vv(F,a)
 \label{obs1x}
.
\end{align}
\end{lemma}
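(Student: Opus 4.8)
My plan is to peel the elements of $F\setminus\{a\}$ off of $I$ by recognising $\vv(F,a)$ as an ordinary VCG threshold in a contraction of $M$. Let $J$ be the instance obtained from $I$ by lowering the cost of every element of $F\setminus\{a\}$ to $0$; then \eqref{vcgF} reads
\[
\vv(F,a)=\cstar(J_a^\infty)-\cstar(J_a^0)=\vcgt{J}{a}.
\]
Since $F\setminus\{a\}$ is independent and has cost $0$ in $J$, a minimum basis of $J_a^\infty$ (respectively of $J_a^0$) may be taken to contain $F\setminus\{a\}$, and bases of $M$ containing $F\setminus\{a\}$ correspond to bases of $M':=M/(F\setminus\{a\})$ with the inherited costs; as $M$ --- and hence $M'$ --- is bridgeless, $a$ is neither a loop nor a coloop of $M'$, so
\[
\vv(F,a)=\cstar(M'\setminus a)-\cstar(M'/a)=:\cVCG_{M'}(a).
\]
If in addition $F\subseteq\Sstar$, then $\Sstar$ is a minimum basis of $J$, hence $B^*:=\Sstar\setminus(F\setminus\{a\})$ is a minimum basis of $M'$, and $a\in B^*$.

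The argument then has two ingredients. \emph{(I)~A greedy formula.} For any bridgeless matroid $N$ with non-negative costs and any non-loop $a$, a short computation from \refL{L1} (using that $N\setminus a$ has the same rank as $N$, together with the rank formula for $N/a$) gives
\[
\cstar(N\setminus a)-\cstar(N/a)=\int_0^\infty\one{a\notin\operatorname{cl}_N(N(t)\setminus\{a\})}\,dt,
\]
where $N(t)=\{e\colon c(e)\le t\}$ and $\operatorname{cl}_N$ is closure in $N$. Applying this to $M$, and to $M'$ after substituting $\operatorname{cl}_{M'}(Z)=\operatorname{cl}_M(Z\cup(F\setminus\{a\}))\setminus(F\setminus\{a\})$, yields
\[
\cVCG(a)=\int_0^\infty\one{a\notin\operatorname{cl}_M(A(t)\setminus\{a\})}\,dt,\qquad
\vv(F,a)=\int_0^\infty\one{a\notin\operatorname{cl}_M((A(t)\cup F)\setminus\{a\})}\,dt.
\]
Enlarging $A(t)$ to $A(t)\cup F$ only enlarges the closure, so the integrand for $\vv(F,a)$ is pointwise at most that for $\cVCG(a)$; hence $\vv(F,a)\le\cVCG(a)$ with no hypothesis on $F$.

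\emph{(II)~Equality when $F\subseteq\Sstar$.} This is the crux and the main obstacle. The key is the matroid analogue of the fact that the VCG payment of a spanning-tree edge equals the cheapest non-tree edge across the corresponding fundamental cut: for a bridgeless matroid $N$ with a minimum basis $B^*$ and $a\in B^*$,
\[
\cVCG_N(a)=\min\set{c(e)\colon e\in C^*_N(a,B^*)\setminus\{a\}},
\]
where $C^*_N(a,B^*)$ is the fundamental cocircuit of $a$ with respect to $B^*$, i.e.\ the unique cocircuit of $N$ disjoint from $B^*\setminus\{a\}$ --- it contains $a$ (every cocircuit meets every basis) and has a second element because $a$ is not a coloop. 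This formula follows from $\cstar(N/a)=\cstar(N)-c(a)$ (valid since $a$ lies in a minimum basis), the fact that a circuit and a cocircuit never meet in exactly one element, and the exchange $B\mapsto B-f+a$: any basis $B$ avoiding $a$ must contain some $f\in C^*_N(a,B^*)\setminus\{a\}$, $B-f+a$ is again a basis, so $c(B)\ge\cstar(N)-c(a)+c(f)$, with equality attained by $(B^*\setminus\{a\})\cup\{e_0\}$ for $e_0$ the cheapest element of $C^*_N(a,B^*)\setminus\{a\}$. Now take $F\subseteq\Sstar$. Since $C^*_M(a,\Sstar)$ is disjoint from $\Sstar\setminus\{a\}$ and $F\setminus\{a\}\subseteq\Sstar\setminus\{a\}$, it is disjoint from $F\setminus\{a\}$, hence is a cocircuit of $M'=M/(F\setminus\{a\})$ (the cocircuits of $M/X$ being exactly the cocircuits of $M$ disjoint from $X$); being disjoint from $\Sstar\setminus\{a\}$ it is in particular disjoint from $B^*\setminus\{a\}$, so by uniqueness $C^*_M(a,\Sstar)=C^*_{M'}(a,B^*)$, with the same costs on it. Therefore $\vv(F,a)=\cVCG_{M'}(a)=\cVCG(a)$, which is \eqref{lmom5}.

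Finally the other three implications drop out, for each $a\in F$. If $F\subseteq\Sstar$, then \eqref{lmom5} and \eqref{ifin} give $\vv(F,a)=\cVCG(a)\ge c(a)$, which is \eqref{obs1y}. If $c(a)<\vv(F,a)$ for all $a\in F$, then $c(a)<\cVCG(a)$ by ingredient~(I), so \eqref{ifnot} forces each such $a$ into every minimum basis and in particular into $\Sstar$: that is \eqref{obs1z}. For \eqref{obs1x}, ``$\Rightarrow$'' is \eqref{obs1y}, while if all $2^{|A|}$ subset-sums of the costs are distinct there is a unique minimum basis $\Sstar$, and then $c(a)\le\vv(F,a)\le\cVCG(a)$ together with \eqref{useconditions} gives $a\in\Sstar$ for every $a\in F$. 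Thus essentially all the work is in ingredient~(II) --- the fundamental-cocircuit formula, and the observation that the fundamental cocircuit of $a$ is preserved when the other elements of $\Sstar$ are contracted away; everything else is bookkeeping. (Alternatively one could attempt \eqref{lmom5} directly as the claim that, for $F\subseteq\Sstar$, $a\notin\operatorname{cl}_M(A(t)\setminus\{a\})$ implies $a\notin\operatorname{cl}_M((A(t)\cup F)\setminus\{a\})$ for all $t$, via iterated circuit elimination, but that route looks messier.)
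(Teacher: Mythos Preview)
Your argument is essentially correct and takes a genuinely different route from the paper. The paper perturbs costs to make all subset-sums distinct, then argues via the greedy algorithm that $B(\{a\},a)=B(F,a)=(\Sstar\setminus a)\cup\{b\}$ for some $b$ (its observation \eqref{obs2}); from this it reads off \eqref{lmom5}, and it obtains \eqref{obs1x} by a separate basis-exchange argument, finally recovering \eqref{obs1y}--\eqref{obs1z} by continuity. Your approach is more structural: you interpret $\vv(F,a)$ as the VCG threshold in the contraction $M'=M/(F\setminus\{a\})$, obtain the universal inequality $\vv(F,a)\le\cVCG(a)$ directly from the integral formula (your ingredient (I)), and get equality under $F\subseteq\Sstar$ by showing that the fundamental cocircuit of $a$ is the same in $M$ and in $M'$. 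What your approach buys is that \eqref{obs1y}--\eqref{obs1z} fall out immediately from ingredient (I) together with \eqref{ifin}--\eqref{ifnot}, with no perturbation or continuity needed; the paper's approach, by contrast, is slightly more elementary in that it avoids the fundamental-cocircuit machinery and works entirely with the greedy algorithm.

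One small gap: in your proof of the fundamental-cocircuit formula, the step ``any basis $B$ avoiding $a$ must contain some $f\in C^*_N(a,B^*)\setminus\{a\}$, $B-f+a$ is again a basis'' is not justified by circuit--cocircuit orthogonality alone. In $K_4$ with $B^*=\{e_{12},e_{13},e_{14}\}$, $a=e_{12}$, $C^*=\{e_{12},e_{23},e_{24}\}$, the basis $B=\{e_{13},e_{23},e_{24}\}$ meets $C^*$ in $\{e_{23},e_{24}\}$, yet $B-e_{24}+e_{12}=\{e_{12},e_{13},e_{23}\}$ is a triangle, not a basis; the circuit $\{e_{12},e_{13},e_{23}\}$ meets $C^*$ in two elements, so orthogonality gives no contradiction. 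What you need is the \emph{symmetric} basis exchange property: since $a\in B^*\setminus B$, there exists $f\in B\setminus B^*$ with both $B^*-a+f$ and $B-f+a$ bases; the first inclusion forces $f\in C^*$, and the second gives the required $c(B)\ge c(B^*)-c(a)+c(f)$. With this fix the argument goes through. (Your side remark that $M'$ is bridgeless is in fact correct: coloops of $M/X$ are exactly the coloops of $M$ lying outside $X$, since dually loops of $M^*\setminus X$ are loops of $M^*$.)
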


By \eqref{lmom5}, if $a \in F \subseteq \Sstar$, 
the extended VCG threshold is equal to the simple one. 
The key properties of the simple VCG threshold 
are \eqref{ifin}--\eqref{useconditions};
\eqref{obs1y}--\eqref{obs1x} give analogous properties for the extended VCG
threshold. 

\refL{Lmom5} shows that for $F\subseteq\Sstar$, 
the VCG payments $\cVCG(a)$ for the items in $F$ depend only on the
costs of the items \emph{not} in $F$.
(For this conclusion, we get the strongest result if we choose $F=\Sstar$.)
This is a substantial (and non-obvious) extension of the observation after
\eqref{split-VCG} that the payment $\cVCG(a)$ does not depend on $c(a)$.

A possibly interesting 
implication is that
collusion 
among a set $F$ of owners 
will not affect their payments, as long as $F \subseteq \Sstar$:
price-fixing 
can only affect the payments if some of its participants
are not amongst the auction winners.

\begin{theorem} \label{tmom5a}
Consider a bridgeless 
matroid with ground set $A$ and
costs $c(a)$ independent continuous random variables,
and let $F\subseteq A$ be an independent set.
Condition on the costs of all items not in $F$
(which by \eqref{vcgF} 
determines $\vv(F,a)$ for all $a \in F$), and 
on the event $F\subseteq\Sstar$
(which by \eqref{lmom5} determines $\cVCG(a)=\vv(F,a)$ for all $a \in F$).
Then the
conditional distributions of $c(a)$, $a\in F$, are independent with
$c(a)$ having the same distribution as 
the conditioned random variable $\bigpar{c(a)\mid c(a)\le \cVCG(a)}$,
where we regard $\cVCG(a)=\vv(a)$ as a constant. 
\end{theorem}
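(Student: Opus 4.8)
The plan is to condition step by step and exploit the independence structure already isolated by \refL{Lmom5}. Fix the costs $(c(b))_{b\notin F}$; by \eqref{vcgF} this fixes the numbers $\vv(F,a)$ for every $a\in F$, and by \eqref{lmom5} on the event $\set{F\subseteq\Sstar}$ these numbers coincide with the VCG payments $\cVCG(a)$. So after this first conditioning the only remaining randomness is the vector $(c(a))_{a\in F}$, which is still distributed as the original independent product measure $\bigotimes_{a\in F}\mu_a$ (the costs being independent across all of $A$), and the event we additionally condition on is $\set{F\subseteq\Sstar}$. The whole theorem thus reduces to identifying the conditional law of $(c(a))_{a\in F}$ given $\set{F\subseteq\Sstar}$, with the $\vv(F,a)$ treated as fixed constants.

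The key step is to describe the event $\set{F\subseteq\Sstar}$ in terms of $(c(a))_{a\in F}$ alone. Here I would invoke \eqref{obs1y} and \eqref{obs1z}: since the $c(a)$ for $a\in F$ are continuous, the exceptional event $\set{c(a)=\vv(F,a)\text{ for some }a\in F}$ has probability zero, and on its complement \eqref{obs1y}--\eqref{obs1z} together give
\begin{equation*}
F\subseteq\Sstar \iff (\forall a\in F)\colon c(a)<\vv(F,a)
\end{equation*}
almost surely (one should note that the ``distinct sums'' hypothesis of \eqref{obs1x} holds a.s.\ for continuous independent costs, or alternatively argue directly from \eqref{obs1y}--\eqref{obs1z} up to a null set). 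Thus, conditionally on $(c(b))_{b\notin F}$, the event $\set{F\subseteq\Sstar}$ is exactly the product event $\bigcap_{a\in F}\set{c(a)<\vv(F,a)}$ — a rectangle in the remaining coordinates.

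Conditioning a product measure on a product (rectangular) event factorizes: the conditional law of $(c(a))_{a\in F}$ given $\bigcap_{a\in F}\set{c(a)<\vv(F,a)}$ is the product of the laws $\bigpar{c(a)\mid c(a)<\vv(F,a)}$, and again by continuity we may write $\le$ for $<$. Since on $\set{F\subseteq\Sstar}$ we have $\vv(F,a)=\cVCG(a)$, this is precisely the asserted conclusion: the $c(a)$, $a\in F$, are conditionally independent with $c(a)\sim\bigpar{c(a)\mid c(a)\le\cVCG(a)}$. Finally one should add the remark that $\vv(F,a)>0$ a.s.\ (e.g.\ because $\cVCG(a)\ge c(a)\ge 0$ and, for a unique minimum basis, the inequality is strict, cf.\ \eqref{useconditions} extended), so that each conditioning event has positive probability and the conditional distributions are well defined.

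The main obstacle I anticipate is the careful handling of null sets and the rigorous passage from the logical equivalences \eqref{obs1y}--\eqref{obs1z}, which are stated for arbitrary deterministic costs, to an almost-sure statement that lets one replace the a priori complicated event $\set{F\subseteq\Sstar}$ by a clean product event after conditioning on $(c(b))_{b\notin F}$; one must check that the ``distinct sums'' condition, or at least enough of it, holds a.s.\ under the continuity hypothesis, and that $\vv(F,a)$, being a function of the conditioned-on variables, is indeed a legitimate constant in the final conditional statement. Once that measure-theoretic bookkeeping is in place, the factorization of a product measure restricted to a rectangle is routine.
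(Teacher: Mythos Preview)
Your proposal is correct and follows essentially the same route as the paper: condition on $(c(b))_{b\notin F}$ to fix the thresholds $\vv(F,a)$, use continuity so that $c(a)\neq\vv(F,a)$ a.s., then invoke \eqref{obs1y}--\eqref{obs1z} to identify the event $\set{F\subseteq\Sstar}$ with the rectangle $\bigcap_{a\in F}\set{c(a)\le\vv(F,a)}$, and factorize the product conditional law. Your added remarks on null sets and positive probability are sound elaborations of points the paper leaves implicit.
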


The final statement means, more formally, that if the conditioning 
in the theorem yields a value $\cVCG(a)=\vv(a)=x$, then
the conditional distribution of $c(a)$ equals the 
distribution of $\bigpar{c(a)\mid c(a)\le x}$ (where we do not condition on
anything else).

Note that this theorem does not assume uniform distributions:
the item costs may follow any continuous distributions,
which need not be identical.
We give some applications to
non-uniform cost distributions in \refS{Snonuniform}.

\section{A simple proof of Theorems \ref{Th1} and \ref{Th2}}\label{Spf1}
In this section we give a proof of Theorem \ref{Th1} 
which also yields \refT{Th2}.
This argument was sketched in \cite{AHW};
since it has not appeared in print, we give it in full here,
with kind permission of the authors. 

\begin{proof}[Proof of \refT{Th1}]
Since the distribution of each $c(a)$ is continuous, almost surely
$c(a)\neq\vcgt I a$ and thus $\Sstar$ is unique 
and \eqref{useconditions} holds.
The total VCG cost is thus  by \eqref{vcgtot0}, \as,
\begin{align}
\cVCG
 &=
  \sum_{a \in \Sstar} \vcgt I a
=
  \sum_{a \in A} \vcgt I a \one{c(a) \leq \vcgt I a}
\end{align}
and
the expected total VCG cost is 
\begin{align}
 \E\bigpar{\cVCG}
 &=
  \sum_{a \in A}\E\Bigpar{ \vcgt I a \one{c(a) \leq \vcgt I a}} .
 \label{SumVCG}
\end{align}
By the method of conditional expectations,
the expected contribution for item $a$ is
\begin{multline}
 \E\bigpar{ \vcgt I a \one{c(a) \leq \vcgt I a} }
 \\=
 \E_{\nota}\big( \E_a \big( \vcgt I a \one{c(a) \leq \vcgt I a} 
   \cond {\notax} \big)\big) 
\label{termVCG0}
\end{multline}
where in the latter expression 
the inner (conditional) expectation $E_a$ 
means the  expectation over the cost distribution for item $a$
conditioned upon the values of the
costs for all items \emph{except} $a$
(with slight redundancy, we also indicate this as a conditioning on $\notax$),
and the outer expectation $E_\nota$ means 
the expectation over the distribution of the costs for all other items.
Taking advantage of the independence of the item costs
and recalling that $\vcgt Ia$ does not depend on $c(a)$,
looking at the inner expectation above we have
\begin{multline}
 \E_a  \big( \vcgt I a \one{c(a) \leq \vcgt I a} 
   \cond {\notax} \big) 
  \\=
   \vcgt I a  \; \Pra{c(a) \leq \vcgt I a \cond \notax} . \label{termVCG}
\end{multline}

Similarly, for the total nominal cost $\cstar$ we have almost surely
\begin{align}
 \cstar &= \sum_{a \in A} c(a) \one{c(a) \leq \vcgt I a}, \\
\E \cstar& = \sum_{a \in A}\E\Bigpar{ c(a) \one{c(a) \leq \vcgt I a}}, 
 \label{SumIdeal} 
\end{align}
where the expected contribution for item $a$ is
\begin{align}
  \E_{\nota} & \big( \E_a \big( c(a) \one{c(a) \leq \vcgt I a} \cond {\notax} \big)\big)
  .
\label{termideal0}
\end{align}
Taking advantage of the independence of the item costs,
and that $a$ has uniform distribution $U(0,d_a)$,
in the inner expectation in \eqref{termideal0} we have
\begin{align}
 \E_a & \big( c(a) \one{c(a) \leq \vcgt I a} \cond {\notax} \big) \notag
 \\& =
   \E_a \big( c(a) \cond c(a) \leq \vcgt I a \big) 
      \; \Pra{c(a) \leq \vcgt I a \cond \notax} 
   \notag
 \\& =
   \tfrac12 \min\set{d_a, \vcgt I a}  \; \Pra{c(a) \leq \vcgt I a \cond \notax}
   \label{termIdeal}
 .
\end{align}

  Immediately, each term given by \eqref{termIdeal} is at most 
half that of the same term in \eqref{termVCG},
and thus the expectation \eqref{termideal0} is at most half the expectation 
\eqref{termVCG0};
correspondingly the sum 
of the expectations 
in \eqref{SumIdeal} at most 
half that of \eqref{SumVCG}.
That is, the expected nominal cost is at most half the expected VCG cost.
\end{proof}

\begin{proof}[First proof of \refT{Th2}]
In this 
special case of \refT{Th1},
all $d_a=1$. Moreover, by the matroid basis exchange property,
since $a$ is not a bridge, if $S$ is any basis containing $a$, there exists
$b\neq a$ such that $S\setminus\set{a}\cup\set{b}$ is a basis, and it
follows from \eqref{nota} that
\begin{equation}
  \label{le1}
\vcgt Ia\le1. 
\end{equation}
Hence, 
the minimum in \eqref{termIdeal} is achieved by $\vcgt I a$,
each term in \eqref{termIdeal} is exactly half that
in \eqref{termVCG}, and thus the proof above of \refT{Th1} yields equality.
\end{proof}

\begin{remark} \label{cap1}
  The proof of \refT{Th2} 
shows that equality holds in \refT{Th1} if and only if 
$\vcgt Ia\le d_a$ a.s., for every $a\in A$.
This is not 
generally true outside the matroid setting,
even if all $d_a$ are equal; see \refE{Epath} and \cite{CFMS}. 
As pointed out by \cite{AHW}, if we know \emph{a priori}
that all item costs are bounded by 1, we could use a modified VCG auction
where the VCG payment for each item is capped to be at most 1; this is
equivalent to introducing fictitious ``shadow copies'' with costs 1 of all
items. 
For this modified auction mechanism,
the argument above shows that
equality holds in \eqref{th1} \cite{AHW}.
\end{remark}

\begin{remark} \label{Rcfms}
The MST result of \cite{CFMS} is the special case of \refT{Th2}
given by \refE{EMST}.
We will not rederive the results of \cite{CFMS} 
for shortest paths and minimum-cost matchings,
but they may be thought of as following from the reasoning of \refR{cap1}:
in these settings the VCG costs are not deterministically capped at 1,
but they are less than 1 asymptotically almost surely,
leading to the asymptotic equalities found in \cite{CFMS}.
\end{remark}

\section{A matroidal proof of \refT{Th2}}\label{Spf2}
In this section we give a second proof of Theorem~\ref{Th2}
using properties specific to matroids,
and introducing arguments useful later.
We begin by proving \refL{L1}.

\begin{proof}[Proof of \refL{L1}]
We regard item $a$ as arriving at time $c(a)$. 
Let $S\in \SS$ be a structure (here, a basis), and select the elements of $S$ as they
arrive. At time $t$, we thus have selected the set $S\cap A(t)$ of items.
Since the structure $S$ is a basis, $|S|=r(A)$ and also
$S\cap A(t)$ is an independent
subset of $A(t)$, whence $|S\cap A(t)|\le r(A(t))$. Consequently,
\begin{equation}\label{cseq}
  \begin{split}
c(S)
&=\sum_{a\in S}c(a)
=\sum_{a\in S} \int_0^\infty \one{t<c(a)}\, dt	
=\int_0^\infty \sum_{a\in S} \one{t<c(a)}\, dt	
\\&
=\int_0^\infty \Bigpar{|S|-\sum_{a\in S} \one{t\ge c(a)}}\, dt	
\\&
=\int_0^\infty \bigpar{|S|-|S\cap A(t)|}\, dt	
\ge\int_0^\infty \bigpar{r(A)-r(A(t))}\, dt	.
  \end{split}
\raisetag{1.2\baselineskip}
\end{equation}

Now, suppose $S$ to be the greedily chosen basis, that is,
the one produced by the following algorithm:
take the items in order of arrival (increasing cost),
breaking ties arbitrarily,
and add an item to $S$
if it is independent of all items previously seen
(thus independent of the items already in $S$).
By construction, $S$ is independent, and at each time $t\ge0$,
$|S\cap A(t)|=r(A(t))$, since every new item that increases the rank of
$A(t)$ is added to $S$. Letting $t\to\infty$ yields $|S|=r(A)$, and thus $S$ is
a basis, so $S\in\SS$. Furthermore, there is equality in \eqref{cseq} for
this basis $S$.
This establishes \eqref{m1} of \refL{L1}.
(It also shows that $S$ is a minimum structure $\Sstar$,
reproving the well-known fact that the greedy algorithm is optimal
for matroids.
Indeed this property characterizes matroids, 
a fact known as the Rado-Edmonds theorem; see \cites{Rado,Gale,Edmonds}.)

For the VCG payment, note that the incentive payment $\cinc(a)$ of an item
$a$ by the definition \eqref{defcstar} is $\cstar(\Minf a)-\cstar(M)$,
where $\Minf a$ is the matroid $M$ with the cost $c(a)$ changed to $\infty$.
This changes the set $A(t)$ to $A(t)\setminus a$, for every $t<\infty$, and thus
\eqref{m1} yields
\begin{equation}\label{kn}
  \begin{split}
\cinc(a)&=
 \cstar(\Minf a)
-\cstar(M)
\\&
= \int_0^\infty \Bigpar{\rk(A)-\rk(A(t)\setminus a)} \, dt 
-\int_0^\infty \Bigpar{\rk(A)-\rk(A(t))} \, dt
\\&
=\int_0^\infty \Bigpar{\rk(A(t))-\rk(A(t)\setminus a)} \, dt .
  \end{split}
\end{equation}
Summing over $a$ we obtain by \eqref{vcgtot} the total VCG overpayment
\begin{equation}\label{VCGover-M}
  \begin{split}
\cVCG-\cstar
=\sum_{a \in A}\cinc(a) 
=\int_0^\infty \sum_{a\in A}\bigpar{\rk(A(t))-\rk(A(t)\setminus a)} \, dt .	
  \end{split}
\end{equation}
By definition, $\rk(A(t))-\rk(A(t)\setminus a)=1$ if $a$ is a bridge in
$A(t)$, and 0 otherwise, and thus, recalling the definition \eqref{bridges},
\begin{equation}\label{bridges-t}
\sum_{a\in A} \bigpar{\rk(A(t))-\rk(A(t)\setminus a)} 
=\beta(A(t)),
\end{equation}
the number of bridges in $A(t)$.
Combining \eqref{VCGover-M} and \eqref{bridges-t} yields \eqref{m2}.
\end{proof}

\begin{remark}
  If we change the cost of $a$ to 0, the set $A(t)$ is changed to
  $A(t)\cup\set a$ for every $t>0$. Applying \eqref{m1} to $\Iinf a$ and
  $\Iz a$ yields, similarly to \eqref{kn}, a formula for the VCG threshold 
for an item:
\begin{equation}\label{qk}
  \begin{split}
\vcgt Ia&
=\int_0^\infty \Bigpar{\rk(A(t)\cup\set a)-\rk(A(t)\setminus a)} \, dt .
  \end{split}
\end{equation}
\end{remark}

\begin{remark}
\label{Rkk} 
In the case of 
a graph matroid defined by a connected graph $G$ as in \refE{EMST},
for any $F \subseteq A$, 
$r(F)=n-\kk(F)$, where $n$ is the number of vertices in $G$ and $\kk(F)$
is the number of components of the subgraph $F$ of $G$ with vertex
set $V(G)$ and edge set $F$. 
Since $G$ is connected, $r(A)=n-1$. 
Hence \eqref{m1} can be written
\begin{align}
 \cstar(M) &= \int_0^\infty \big(\kk(A(t))-1)\big) \, dt. \label{kk}
\end{align}
This formula has been used 
by \cite{Janson} and \cite{CFIJS}
to study the cost of the MST in
a complete graph $K_n$ with
random \iid{} edge costs. See further \refE{EMST2}.
\end{remark}

We turn to the case of random costs with the uniform distribution $U(0,1)$.

\begin{lemma}\label{Ldiff}
Where each element $a \in A$ has cost $c(a) \sim U(0,1)$ independently,
\begin{align*}
 \E(\b(A(t))) = t \frac d{dt} \E(\rk(A(t))),
\qquad 0<t<1.
\end{align*}
\end{lemma}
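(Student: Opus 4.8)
The plan is to compute $\E(\b(A(t)))$ and $\E(\rk(A(t)))$ directly in terms of the indicator events ``$a \in A(t)$'', exploiting the independence and uniformity of the costs, and then to differentiate the rank expression in $t$. First I would fix $t \in (0,1)$ and write $A(t)$ as a random subset of $A$ in which each element $a$ is present independently with probability $\Prb{c(a)\le t}=t$. For the rank term I would condition on the set $A(t)\setminus a$ for a fixed element $a$ and recall, as in \eqref{bridges-t}, that $a$ is a bridge in $A(t)$ exactly when $a\in A(t)$ and $\rk\bigpar{(A(t)\setminus a)\cup\set a}>\rk(A(t)\setminus a)$; equivalently $a\in A(t)$ and $a$ raises the rank of $A(t)\setminus a$. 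Summing over $a$,
\begin{equation*}
\E(\b(A(t)))
=\sum_{a\in A}\Prb{a\in A(t)}\;\E\bigpar{\rk((A(t)\setminus a)\cup\set a)-\rk(A(t)\setminus a)}
= t\sum_{a\in A}\E\bigpar{\rk((A(t)\setminus a)\cup\set a)-\rk(A(t)\setminus a)},
\end{equation*}
using independence of the event $\set{a\in A(t)}$ from the random set $A(t)\setminus a$.

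The second ingredient is a ``coupling in $t$'' identity for $\E(\rk(A(t)))$. I would think of all the costs as fixed and let $t$ vary: then $\frac{d}{dt}\rk(A(t))$ is, in the distributional sense, a sum of point masses, one at each value $c(a)$ such that adding $a$ to $A(c(a)^-)=\set{b\colon c(b)<c(a)}$ increases the rank. Taking expectations and using the fact that each $c(a)$ has density $1$ on $(0,1)$, I get
\begin{equation*}
\frac{d}{dt}\E(\rk(A(t)))
=\sum_{a\in A}\E\bigpar{\bigpar{\rk(A(t)\cup\set a)-\rk(A(t)\setminus a)}\cond c(a)=t}\cdot 1,
\end{equation*}
and since the event $\set{c(a)=t}$ is independent of the costs of the other items, the conditional expectation is just $\E\bigpar{\rk(A(t)\cup\set a)-\rk(A(t)\setminus a)}$ with $A(t)$ now built from the \emph{other} elements (whether or not $a$ itself lies below $t$ is irrelevant inside that expression, since $A(t)\setminus a$ and $A(t)\cup\set a$ remove/add $a$ regardless). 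Comparing with the displayed formula for $\E(\b(A(t)))$ gives $\E(\b(A(t)))=t\,\frac{d}{dt}\E(\rk(A(t)))$.

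Alternatively, and perhaps more cleanly, I would avoid differentiating a random step function by writing $\E(\rk(A(t)))=\sum_S \Prb{A(t)=S}\,\rk(S)$ as an explicit polynomial in $t$: grouping by $|S|=k$, $\E(\rk(A(t)))=\sum_{k}t^k(1-t)^{|A|-k}\sum_{|S|=k}\rk(S)$, differentiate term-by-term, and then reorganize the resulting sum by inserting/deleting a single element; the combinatorial bookkeeping should collapse to $t^{-1}$ times $\E(\b(A(t)))$ once one uses $\rk(S)-\rk(S\setminus a)=\one{a\text{ bridge in }S}$. The main obstacle I anticipate is purely careful bookkeeping: making the ``$\frac{d}{dt}$ of a random piecewise-constant function'' argument rigorous (one should really justify interchanging $\E$ and $\frac{d}{dt}$, which is easy here because $\E(\rk(A(t)))$ is a polynomial of degree $|A|$ in $t$ and hence smooth), and correctly matching the single-element rank increments $\rk(A(t)\cup\set a)-\rk(A(t)\setminus a)$ appearing on the two sides so that the factor $t$ comes out exactly. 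Everything else is a direct consequence of independence, the uniform density being $1$ on $(0,1)$, and the bridge identity \eqref{bridges-t}.
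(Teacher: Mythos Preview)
Your argument is correct and complete; the two displayed identities match because $(A(t)\setminus a)\cup\set a=A(t)\cup\set a$, and the independence/density-$1$ steps are exactly right. Your remark that $\E(\rk(A(t)))$ is a polynomial in $t$ cleanly justifies the interchange of $\E$ and $\tfrac{d}{dt}$.

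This is, however, a genuinely different route from the paper's. The paper argues by running the process \emph{backwards} in time: conditioned on being in $A(t)$, each element has cost $U(0,t)$, so in the interval $(t-dt,t)$ each element of $A(t)$ is removed with probability $dt/t$; since only the removal of a bridge lowers the rank, $\frac{d}{dt}\E(\rk(A(t)))=t^{-1}\E(\b(A(t)))$. Your approach instead runs time forward and decomposes both sides over elements $a\in A$: you write $\E(\b(A(t)))=t\sum_a\E[\rk((A(t)\setminus a)\cup\set a)-\rk(A(t)\setminus a)]$ using $\Pr(a\in A(t))=t$ and independence, and identify the same sum as $\tfrac{d}{dt}\E(\rk(A(t)))$ by differentiating the arrival-time representation. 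The paper's backward-time viewpoint is more conceptual and feeds directly into the martingale \eqref{martin} used later in \refS{Svar}; your elementwise computation is more self-contained and, as written, more immediately rigorous (the paper explicitly calls its own argument informal and defers the formal version to the martingale proof).
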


\begin{proof}
We argue informally. (See also \refS{Svar} below, specifically the second
proof of \eqref{v1}--\eqref{v2}.)
Run the process $A(t)$ backwards in time, from $t=1$ to $t=0$.
At time $t$, all elements in $A(t)$ have 
distribution $U(0,1)$ conditioned upon being less than $t$,
which is to say distribution $U(0,t)$.
It follows that in the (backward) time interval $(t,t-dt)$,
each element of $A(t)$ is lost with probability $(1/t) dt$.
In particular, this holds for each of the $\b(A(t))$ bridges,
whose deletion would reduce the rank.
Thus the expected decrease in $\rk(A(t))$ is $\b(A(t))t^{-1} dt$.
Going forward in time, this means
\begin{equation*}
\frac d{dt} \E(\rk(A(t)))
= t^{-1} \E(\b(A(t))).
\qedhere 
\end{equation*}
\end{proof}

\begin{proof}[Second proof of \refT{Th2}]
Since $M$ is bridgeless, $\b(A(t))=\b(A)=0$ for $t>1$, and thus
by \eqref{m2}, \refL{Ldiff}, integration by parts and \eqref{m1},
\begin{equation}\label{ip}
  \begin{split}
\E\cVCG-\E\cstar
&=\int_0^1\E\b(A(t))\,dt
=\int_0^1 t \frac d{dt} \E(\rk(A(t)))\,dt
\\&
=\Bigsqpar{t\bigpar{\E\rk(A(t))}}_{t=0}^1
 -\int_0^1 \E(\rk(A(t)))\,dt
\\&
= r(A) -\int_0^1 \E(\rk(A(t)))\,dt
\\&
=\int_0^1\big(\rk(A)-\E\rk(A(t))\big) \, dt
=\E\cstar.
  \end{split}
\raisetag{1.2\baselineskip}
\end{equation}
\end{proof}

\section{Conditional expectations and distributions} 
\label{Scond2}

In this section
we prove \refL{Lmom5} and \refT{tmom5a},
then present two corollaries and prove \refT{Thcond}.

\begin{proof}[Proof of \refL{Lmom5}]
We first prove \eqref{lmom5}. 
We assume for simplicity that 
all $2^{|A|}$ sums of costs over different sets of elements
are distinct.
Note that this can always be achieved by adding small independent
random increments 
$\gD c(a)$ to the costs. 
(If $\Sstar$ is a minimum structure, we may for example take
$\gD c(a)\sim U(0,\eps/|\Sstar|)$ for $a\in\Sstar$ and 
$\gD c(a)\sim U(\eps,2\eps)$ for $a\notin\Sstar$, for a small $\eps>0$,
so that $\Sstar$ is still a minimum structure.)
Since both sides of the equality in 
\eqref{lmom5} are continuous functions of the item costs,
the general case follows by continuity.

Define $\BB(F,E)$ to be a minimum-cost basis in $\IFE$,
so that
\begin{equation}\label{cIFE}
  c(\BB(F,E)) = \cstar(\IFE)+c(F\setminus E).
\end{equation}
We will only use this definition where $F$ is an independent set and
$E \subseteq F$, 
and we think of $\BB(F,E)$ as a cheapest basis 
forced to avoid $E$ but to include the rest of $F$.
For convenience we write $\BB(F,\set a)$ as $\BB(F,a)$.
Note that $\BB(F,E)$ has finite cost if and only if 
there exists a basis avoiding $E$,  \ie{} if $r(A\setminus E)=r(A)$.
By our assumption above, $\BB(F,E)$ is unique when it has finite cost;
in particular, $\Sstar=\BB(\emptyset,\emptyset)$ is unique. 

We claim that 
\begin{align}
F \subseteq \Sstar
\implies
(\forall a \in F) (\exists b \in A) \colon
\BB(\set a,a)=
\BB(F,a) = (\Sstar \setminus a) \cup \set{b} .
 \label{obs2}
\end{align}
Informally, this says that if $F$ is optimal, 
then any single-element exclusion
perturbs its best completion by just one further element.
To see this, let $a\in F\subseteq\Sstar$ 
and consider constructing $B=\BB(\set a,a)$ by the greedy algorithm.
(Since the matroid is bridgeless, $A\setminus\set a$ includes a basis,
so $\BB(\set a,a)$ has finite cost.)
This goes identically to the greedy construction of $\Sstar$
until element $a$ is reached:
$a$ is added to $\Sstar$ but not to $B$.
Then, 
every element accepted to $\Sstar$ is accepted to $B$,
but eventually some $b$ not chosen for $\Sstar$ is chosen for $B$;
denote by $\Sstar_i$ and $B_i$ the two collections of elements at this point.
Then $\Sstar_i \cup \set b = B_i \cup\set a$ is dependent, and
\begin{equation}
  r(B_i)=|B_i|=|\Sstar_i|=r(\Sstar_i)=r(\Sstar_i\cup\set b)
=r(B_i\cup\set a).
\end{equation}
Thus $r(\Sstar_i)=r(\Sstar_i\cup\set b)$
and $r(B_i)=r(B_i\cup\set a)$, and it follows by the submodular property
of the rank that for any subset $H\subseteq A$,
\begin{equation}
  r(B_i\cup H)
=r(B_i\cup\set a \cup H)
=r(\Sstar_i\cup\set b \cup H)
=r(\Sstar_i \cup H).
 \label{withH}
\end{equation}
Since the greedy algorithm selects the elements that increase the rank of
the set selected so far, it follows from \eqref{withH}
that from now on, exactly the same
elements will be selected for $B=B(\set a,a)$ and $\Sstar$.
Thus $\BB(\set a,a)=(\Sstar \setminus a) \cup \set b$.
The minimization for $\BB(\set a,a)$ 
is a relaxation of that for $\BB(F,a)$,
yet we have shown that 
$\BB(\set a,a) \cap F 
 =[(\Sstar \setminus a) \cup \set b] \cap F
 = F \setminus a$,
therefore $\BB(\set a,a) = \BB(F,a)$, which completes the proof of
\eqref{obs2}. 

Next, recalling 
\eqref{cIFE}, 
\begin{align}
\cstar(\Iz a)=c(\BB(\set a,\emptyset))-c(a)
&&\text{and}&&&
\cstar(\Iinf a)=c(\BB(\set{a},a)).
 \label{Ba0}
\end{align}
Furthermore, if $a\in F\subseteq\Sstar$, then
$B(\set a,\emptyset)=B(F,\emptyset)=\Sstar$, and
\eqref{obs2} shows that 
$B(\set a,a)=B(F,a)$. 
Thus by the VCG threshold definition \eqref{nota}, \eqref{Ba0}, 
\eqref{cIFE} again, and \eqref{vcgF},
\begin{align*}
\cVCG(a) 
&= \cstar(\Iinf a) - \cstar(\Iz a)
 = c(\BB(\set a,a)) - [c(\BB(\set a,\emptyset))-c(a)]
 \notag
\\&  = c(\BB(F,a)) - [c(\BB(F,\emptyset))-c(a)]
 \notag
 \\& = 
\cstar(I^{0,\infty}_{F\setminus\set a,a})
- \cstar(I^{0}_{F})
\notag \\ &= \vv(F,a)
 .
\end{align*}
This proves \eqref{lmom5}. 

We next prove \eqref{obs1x}, 
under the hypothesis that 
all $2^{|A|}$ sums of costs over different sets of elements
are distinct. 
(This hypothesis is natural and convenient, but could be weakened.) 

We first claim that 
\begin{align}
 F \subseteq \Sstar
 & \iff 
(\forall a \in F) \colon c(\BB(F,a)) \geq c(\BB(F,\emptyset)) .
 \label{obs1}
\end{align}
The forward implication is trivial:
$\BB(F,a)$ is a basis, $\Sstar$ is a minimum-cost basis,
and $\BB(F,\emptyset)$ is a basis minimized over a set
of possibilities including $\Sstar$.
The backward implication, through the contrapositive,
says informally that if $F$ is not contained within 
the minimum-cost basis
then excluding some single element $a$ improves it,
with $c(\BB(F,a))<c(\BB(F,\emptyset))$.
To prove this,
note that by assumption, all element costs are distinct.
Sort both $B=\BB(F,\emptyset)$ and $\Sstar$ in order of increasing cost
with elements $b_1,b_2,\ldots$ and $s_1,s_2,\ldots$ 
and write $B_i := \set{b_1,\ldots,b_i}$ and $\Sstar_i := \set{s_1,\ldots,s_i}$.
Let $i$ be the first index for which 
$s_i \neq b_i$.
Since $\Sstar$ can be constructed by the greedy algorithm
it must be that $c(s_i) \leq c(b_i)$,
and since all element costs are distinct, $c(s_i) < c(b_i)$.
Adding $s_i$ to $B$ creates a circuit $C_1$,
at least one of whose elements $b_j$ must have index $j \geq i$ 
(or else 
$C_1 \subseteq B_{i-1} \cup \set{s_i} 
 = \Sstar_{i-1} \cup \set{s_i} = \Sstar_i$,
contradicting independence of $\Sstar$).
Thus $c(s_i) < c(b_i) \leq c(b_j)$,
and $B' = B \cup \set {s_i} \setminus b_j$ 
has lower cost than $B$.
$B'$ is also a basis.
(If not, it has a circuit $C_2 \ni s_i$,
thus $B \cup \set {s_i}$ has 
circuits $C_1$ containing $s_i$ and $b_j$,
and $C_2$ containing $s_i$ but not $b_j$,
and by the 
circuit exchange axiom has a circuit $C_3$ with 
$C_3 \subseteq C_1 \cup C_2 \setminus s_i \subseteq B$,
contradicting independence of $B$.)
It cannot be that $b_j \in B \setminus F$, 
for then $B'\supseteq F$ which would contradict minimality of
$B=\BB(F,\emptyset)$. 
Thus $b_j \in F$ and we have shown that 
$c(\BB(F,b_j)) \leq  c(B') < c(\BB(F,\emptyset))$,
which completes the proof of \eqref{obs1}. 

By \eqref{cIFE} and definition \eqref{vcgF},
observation \eqref{obs1} is  equivalent to
\eqref{obs1x}.

We have proved \eqref{obs1x} 
under the hypothesis that 
all sums of costs over different sets of elements
are distinct. This immediately implies
\eqref{obs1y} and \eqref{obs1z} under the same hypothesis.
Moreover, the continuity argument at the beginning of the proof applies to 
\eqref{obs1y} and \eqref{obs1z}, which shows that they hold
for arbitrary costs,
concluding the proof of \refL{Lmom5}. 
(Note that the continuity argument does not apply to \eqref{obs1x}.)
\end{proof}

\begin{proof}[Proof of \refT{tmom5a}]
Fix $F$ and condition on the costs $c\vert_{A\setminus F}$.
(I.e., sample these costs first.)
As noted after \eqref{vcgF},
for $a \in F$
this determines 
$\vv(a)=\cstar(\Ixx{F\setminus{a}}{a})-\cstar(I^0_{F})$.

By hypothesis, the costs
$c(a)$ are random with independent continuous distributions, 
and thus almost surely 
$c(a)\neq \vv(a)$ for every $a\in F$.
This remains true when we condition also on $F\subseteq\Sstar$
since this is an event of positive probability.
By \eqref{obs1y}--\eqref{obs1z}, and $c(a)\neq\vv(a)$ a.s.\ as just observed,
this is the same as
further conditioning on $c(a)\le \vv(F,a)$ for all $a\in F$, and thus the
conditioned distributions of $c(a)$, $a\in F$, are independent and 
given by separately conditioning each $c(a)$ on $c(a)\le \vv(F,a)$. 
\end{proof}

\begin{remark}
Since the resulting conditional distribution in \refT{tmom5a}
depends only on the event
$F\subseteq\Sstar$ and the VCG costs $\cVCG(a)$, we obtain 
(by the definition of conditional expectation)
the same result by conditioning on these only.
\end{remark}

In the special case of \iid{} uniform costs, \refT{tmom5a} can be stated as
follows. 
\begin{corollary} \label{cmom4ii}
Consider a bridgeless 
matroid with ground set $A$ and
\iid{} costs $c(a)\sim U(0,1)$,
and let $F\subseteq A$ be an independent set.
Then, conditioning 
on the event $F\subseteq\Sstar$ and on the costs $c(a)$, $a\notin F$,
the VCG costs
$\cVCG(a)$, $a\in F$, are determined by \eqref{lmom5} and
the conditional distributions of $c(a)$,
$a\in F$, 
are independent with  $c(a) \sim U(0,\cVCG(a))$.

The same holds if we instead condition on $F\subseteq\Sstar$ and the values
$\cVCG(a)$, $a\in F$. 
\end{corollary}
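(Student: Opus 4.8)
The plan is to obtain \refC{cmom4ii} as the specialization of \refT{tmom5a} to \iid{} $U(0,1)$ costs, the only extra ingredient being the matroid bound $\vcgt Ia\le1$ from \eqref{le1}, which lets us name the conditioned distribution explicitly. First I would apply \refT{tmom5a} verbatim: conditioning on the costs $c(a)$, $a\notin F$, and on the event $F\subseteq\Sstar$, the conditional laws of the $c(a)$, $a\in F$, are independent, and each equals the law of $\bigpar{c(a)\mid c(a)\le\cVCG(a)}$, where for $a\in F\subseteq\Sstar$ the value $\cVCG(a)=\vv(F,a)$ is, by \eqref{lmom5} and the observation after \eqref{vcgF}, a deterministic function of the conditioned costs (those outside $F$); this is the sense in which the VCG costs are ``determined''.

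The second step is to identify $\bigpar{c(a)\mid c(a)\le\cVCG(a)}$ with $U(0,\cVCG(a))$. Because the matroid is bridgeless and $c(a)\sim U(0,1)$, \eqref{le1} gives $\cVCG(a)=\vcgt Ia\le1$, and on the event $F\subseteq\Sstar$ we also have $\cVCG(a)=c(a)+\cinc(a)>c(a)\ge0$ by \eqref{useconditions} (valid since $\Sstar$ is a.s.\ unique for continuous costs). Hence $0<\cVCG(a)\le1$ a.s., and conditioning a $U(0,1)$ variable on lying below such a threshold $x$ returns precisely $U(0,x)$. Combining the two steps yields the first assertion.

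For the final sentence of the corollary I would invoke the remark following the proof of \refT{tmom5a}: the conditional distribution constructed there depends on the conditioning only through the event $F\subseteq\Sstar$ and the numbers $\cVCG(a)$, $a\in F$, so by the definition of conditional expectation (the tower property) it is equally the conditional distribution given just that coarser $\sigma$-algebra.

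I do not expect a real obstacle. The one point that genuinely uses a hypothesis, rather than being bookkeeping, is the identification of the conditioned law as $U(0,\cVCG(a))$: this fails without $\cVCG(a)\le1$ (conditioning $U(0,1)$ on $\le\cVCG(a)$ would otherwise just reproduce $U(0,1)$), so the bridgeless assumption is essential here via \eqref{le1}; one should also note $\cVCG(a)>0$ on the conditioning event so that the law is nondegenerate.
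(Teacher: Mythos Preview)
Your proposal is correct and matches the paper's own proof essentially line for line: apply \refT{tmom5a} to get the conditional law as that of $\bigpar{c(a)\mid c(a)\le\cVCG(a)}$, then invoke \eqref{le1} to reduce $U\bigl(0,\min(\cVCG(a),1)\bigr)$ to $U(0,\cVCG(a))$, and use the remark after \refT{tmom5a} for the final sentence. The paper's version is terser (it does not spell out $\cVCG(a)>0$), but the argument is the same.
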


\begin{proof}
\refT{tmom5a} shows that the conditional
distribution of $c(a)$ is uniform with
$c(a)\sim U(0,\min(\cVCG(a),1))$. 
Furthermore, by \eqref{le1}, or as a consequence of \eqref{obs2}, we 
have $\cVCG(a)\le1$. 
Hence, the conditional distribution is
$U(0,\cVCG(a))$.
\end{proof}

In the case when $F=\set a$ is a
singleton, \refC{cmom4ii} says that for any item $a$, 
conditioned on $a\in\Sstar$ and $\cVCG(a)$, the distribution of $c(a)$ is
uniform on $(0,\cVCG(a))$. This was seen already in the proof of Theorems
\ref{Th1} and \ref{Th2} in \refS{Spf1}; \refC{cmom4ii} is thus a
considerable generalization of this fact to several costs. 

The other extreme case is the case when $F$ is a basis.
In this case, $F\subseteq\Sstar$ is equivalent to $F=\Sstar$, and 
\refC{cmom4ii} reduces to the following.

\begin{corollary} \label{cmom2}
Consider a bridgeless  
matroid with costs $c(a)$ \iid{} uniform random variables, 
$c(a) \sim U(0,1)$.
Conditioned on a minimum structure $\Sstar$ and VCG payments
$\cVCG(a)$, $a\in\Sstar$,  
for its elements,
the costs $c(a)$, $a\in\Sstar$, of the elements in the minimum structure
are independent with  $c(a) \sim U(0,\cVCG(a))$.
\qed
\end{corollary}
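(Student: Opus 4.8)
The plan is to derive \refC{cmom2} as the special case of \refC{cmom4ii} in which the independent set $F$ is taken to be a whole basis. First I would fix an arbitrary basis $F$ of the matroid $M$ and record the elementary observation that, since $\Sstar$ is a basis and all bases of $M$ have the same size $\rk(A)$, the inclusion $F\subseteq\Sstar$ is equivalent to $F=\Sstar$; thus the event $\set{F\subseteq\Sstar}$ occurring in \refC{cmom4ii} is the same event as $\set{\Sstar=F}$. The second assertion of \refC{cmom4ii}, applied to this $F$, then states precisely that, conditioned on $\set{\Sstar=F}$ and on the values $\cVCG(a)$, $a\in F$, the costs $c(a)$, $a\in F$, are independent with $c(a)\sim U(0,\cVCG(a))$.

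Next I would assemble these per-basis statements into the assertion about the random $\Sstar$. The minimum structure $\Sstar$ takes values in the finite set of bases of $M$, and in the continuous setting it is a.s.\ unique, so the events $\set{\Sstar=F}$, over bases $F$ with $\Prb{\Sstar=F}>0$, form a partition of the probability space up to a null set. Conditioning on the pair consisting of $\Sstar$ and of $\bigpar{\cVCG(a)}_{a\in\Sstar}$ therefore means, on the atom $\set{\Sstar=F}$, conditioning on $\set{\Sstar=F}$ together with $\bigpar{\cVCG(a)}_{a\in F}$ --- which is exactly the conditioning handled in the previous paragraph. Hence on every such atom the conditional law of $\bigpar{c(a)}_{a\in\Sstar}$ is the product of the uniform laws $U(0,\cVCG(a))$, and since this description has the same form on each atom, it persists after conditioning on the random $\Sstar$ and its VCG payments. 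That is the statement of \refC{cmom2}.

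I do not expect a substantive obstacle here: all the real content is carried by \refC{cmom4ii} (and behind it \refT{tmom5a} and \refL{Lmom5}), and what remains is the routine bookkeeping of promoting a fixed independent set $F$ to the random basis $\Sstar$. The two places meriting a word of care are the identity $\set{F\subseteq\Sstar}=\set{\Sstar=F}$ for bases $F$, which needs only equality of basis sizes, and the a.s.\ uniqueness of $\Sstar$, already noted to follow from continuity of the cost distributions. One should also recall, as in the proof of \refC{cmom4ii}, that $\cVCG(a)\le1$ for every $a$ by \eqref{le1}, so that the truncated distribution is genuinely $U(0,\cVCG(a))$ rather than merely $U(0,\min(\cVCG(a),1))$.
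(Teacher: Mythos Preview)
Your proposal is correct and matches the paper's approach: the paper gives no separate proof but simply notes, in the sentence preceding the statement, that when $F$ is a basis the condition $F\subseteq\Sstar$ is equivalent to $F=\Sstar$, so \refC{cmom4ii} reduces to \refC{cmom2} (hence the \qed{} in the statement itself). Your additional bookkeeping about assembling the per-basis statements via the a.s.\ partition $\set{\Sstar=F}$ is a reasonable elaboration of what the paper leaves implicit.
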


\begin{proof}[Proof of \refT{Thcond}]
It follows immediately from \refC{cmom2} that conditioned on the minimum
structure $\Sstar$ and on the VCG payments $\cVCG(a)$ for all $a\in\Sstar$,
the conditional expectation of $\cstar=\sum_{a\in \Sstar} c(a)$
equals $\frac12\sum_{a\in \Sstar} \cVCG(a)=\frac12\cVCG$.
Hence the same holds also if we condition only on $\cVCG$.
\end{proof}

\begin{remark}
In  \refC{cmom2},
conditioning on the individual VCG values $\cVCG(a)$,
we obtain the conditional \emph{distribution} of the costs $c(a)$, $a\in \Sstar$,
and thus that of 
their sum $\cstar$.
In \refT{Thcond}, conditioning on the total VCG cost $\cVCG$ but not
on the individual values
$\cVCG(a)$, we obtain only a formula for the conditional
\emph{expectation} of $\cstar$; there is no general formula for the
conditional {distribution} of $\cstar$ given $\cVCG$, 
as is seen in \refE{EU}.
\end{remark}

\section{Variances and covariances}\label{Svar}

In this section, we continue to consider a bridgeless
matroid with costs $c(a)$ \iid{} uniform random variables, 
$c(a) \sim U(0,1)$,
and will prove \refT{Thvar}.
\ignore{
We begin by noting the special case $m=1$ of \refC{Cmom}:
\begin{align}\label{t2}
\E\bigpar{\cstar\cVCG}=\tfrac12\E\bigpar{(\cVCG)^2}.
\end{align}  
Equation \eqref{cov} follows from  \eqref{t2} and \refT{Th2}.
}

First, we use \refC{cmom2} to obtain 
a formula for
$\Var(\cstar)$.

\begin{theorem}\label{T22}
For a bridgeless
matroid with costs $c(a)$ \iid{} uniform random variables, 
$c(a) \sim U(0,1)$,
\begin{align}
\label{t22b}
\Var\bigpar{\cstar}&=\frac14\Var\bigpar{\cVCG}+
\frac14 \, \E\Bigpar{\sum_{a\in\Sstar}c(a)^2} 
\intertext{and}
\E\bigpar{(\cstar)^2}&=\frac14 \, \E\bigpar{(\cVCG)^2}+
\frac14 \, \E\Bigpar{\sum_{a\in\Sstar}c(a)^2}
\label{t22}
.
\end{align}  
\end{theorem}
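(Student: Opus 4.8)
The plan is to condition on the pair $(\Sstar,(\cVCG(a))_{a\in\Sstar})$ and exploit \refC{cmom2}, which says that conditionally the costs $c(a)$, $a\in\Sstar$, are independent with $c(a)\sim U(0,\cVCG(a))$. Since $\cstar=\sum_{a\in\Sstar}c(a)$ and conditionally the summands are independent, the conditional variance is $\Var(\cstar\mid\Sstar,\cVCG(\cdot))=\sum_{a\in\Sstar}\Var(c(a)\mid\cdot)=\sum_{a\in\Sstar}\tfrac1{12}\cVCG(a)^2$, while $\E(c(a)\mid\cdot)=\tfrac12\cVCG(a)$ so the conditional mean is $\tfrac12\cVCG$. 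Then apply the law of total variance,
\begin{align*}
\Var(\cstar)
=\E\Var(\cstar\mid\Sstar,\cVCG(\cdot))+\Var\E(\cstar\mid\Sstar,\cVCG(\cdot))
=\tfrac1{12}\E\Bigpar{\sum_{a\in\Sstar}\cVCG(a)^2}+\tfrac14\Var(\cVCG).
\end{align*}

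This already has the right shape, so the remaining work is to rewrite $\tfrac1{12}\E\bigpar{\sum_{a\in\Sstar}\cVCG(a)^2}$ as $\tfrac14\E\bigpar{\sum_{a\in\Sstar}c(a)^2}$. The natural route is again conditioning: for a fixed selected element $a$ with $\cVCG(a)=x$, \refC{cmom2} (or already \refC{cmom4ii} / the singleton case) gives $c(a)\sim U(0,x)$, so $\E\bigpar{c(a)^2\mid a\in\Sstar,\cVCG(a)=x\bigr}=\tfrac13 x^2=\tfrac13\cVCG(a)^2$. Summing over $a\in\Sstar$ and taking expectations,
\begin{align*}
\E\Bigpar{\sum_{a\in\Sstar}c(a)^2}
=\E\Bigpar{\sum_{a\in\Sstar}\tfrac13\cVCG(a)^2}
=\tfrac13\,\E\Bigpar{\sum_{a\in\Sstar}\cVCG(a)^2},
\end{align*}
i.e. $\E\bigpar{\sum_{a\in\Sstar}\cVCG(a)^2}=3\,\E\bigpar{\sum_{a\in\Sstar}c(a)^2}$. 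Substituting into the law-of-total-variance identity gives $\Var(\cstar)=\tfrac14\E\bigpar{\sum_{a\in\Sstar}c(a)^2}+\tfrac14\Var(\cVCG)$, which is \eqref{t22b}. Then \eqref{t22} follows by adding $\E(\cstar)^2=\tfrac14(\E\cVCG)^2$ (from \refT{Th2}) to both sides of \eqref{t22b} and using $\Var(\cVCG)+(\E\cVCG)^2=\E((\cVCG)^2)$.

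The one point that needs a little care is the measure-theoretic bookkeeping in the conditioning: $\Sstar$ is a random \emph{set} and the family $(\cVCG(a))_{a\in\Sstar}$ a random vector of varying dimension, so one should phrase the conditioning via, say, the $\sigma$-field generated by $\one{a\in\Sstar}$ and $\cVCG(a)\one{a\in\Sstar}$ over all $a\in A$ (equivalently, condition on $\Sstar$ and then on $(\cVCG(a))_{a\in A}$, noting that for $a\notin\Sstar$ the costs are not being described by \refC{cmom2}). Given the machinery already in place — \refC{cmom2} for the conditional law of the selected costs and \refT{Th2} for $\E\cstar=\tfrac12\E\cVCG$ — this is the only genuine obstacle, and it is routine; everything else is the law of total variance plus the elementary moments of a uniform variable. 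Note also that $\sum_{a\in\Sstar}c(a)^2$ is well-defined since $\Sstar$ is a.s. unique, and $\sum_{a\in\Sstar}\cVCG(a)^2$ does not depend on the choice of $\Sstar$ for the same reason $\cinc(I)$ does not, \cf{} \eqref{vcgtot}.
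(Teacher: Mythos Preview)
Your proposal is correct and follows essentially the same approach as the paper: condition on the $\sigma$-field generated by $\Sstar$ and the VCG payments $\cVCG(a)$ for $a\in\Sstar$, apply \refC{cmom2} and the law of total variance to obtain $\Var(\cstar)=\tfrac14\Var(\cVCG)+\tfrac1{12}\E\sum_{a\in\Sstar}\cVCG(a)^2$, then use the same conditioning to show $\E\sum_{a\in\Sstar}c(a)^2=\tfrac13\E\sum_{a\in\Sstar}\cVCG(a)^2$, and finally derive \eqref{t22} from \eqref{t22b} via \refT{Th2}. Your remark on the measure-theoretic bookkeeping matches exactly how the paper formalizes the conditioning $\sigma$-field.
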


\begin{proof}
Let $\cG$ denote the $\gs$-field 
generated by $\Sstar$ and the VCG costs 
$\cVCG(a)$, $a\in\Sstar$. 
(More formally, $\cG$ is
generated by the random variables 
$\set{ \one{a\in\Sstar}, \one{a\in\Sstar}\cVCG(a) \colon a\in A}$.
Recall that $\Sstar$ is a.s.\ unique.)
By \refC{cmom2}, conditioned on $\cG$, the random variables $c(a)$ for
$a\in\Sstar$ are independent with $c(a)\sim U(0,\cVCG(a))$.
Hence, by the law of total variance, recalling $\cstar=\sum_{a\in\Sstar} c(a)$,
\begin{align}
\Varb{\cstar}
 &=
 \Varb{\E (\cstar \mid \cG)} + \Eb{\Var( \cstar \mid \cG)}
 \notag \\&=
 \Var\Bigpar{ \sum_{a \in \Sstar}\E(c(a)\mid\cG)} 
   + \E\Bigpar{\sum_{a \in \Sstar} \Varb{c(a)\mid \cG)}}
 \notag \\&=
 \Var\Bigpar{\sum_{a \in \Sstar}\tfrac12 \cVCG(a)} 
   + \E\Bigpar{\sum_{a \in \Sstar} \tfrac1{12}(\cVCG(a))^2}
 \notag \\&=
 \Var\Bigpar{\frac12  \, \cVCG} 
   + \frac1{12} \, \E\Bigpar{\sum_{a \in \Sstar}(\cVCG(a))^2}.
\label{tv1}
\end{align}
The same conditioning gives
\begin{align}\label{tv2}
\E\Bigpar{\sum_{a\in\Sstar}c(a)^2} 
&=
\E\Bigpar{\E\Bigpar{\sum_{a\in\Sstar}c(a)^2\mid\cG}}
= \frac13 \, \E\Bigpar{\sum_{a\in\Sstar}(\cVCG(a))^2},
\end{align}
and \eqref{t22b} follows from \eqref{tv1} and \eqref{tv2}.

Finally, \eqref{t22} follows from \eqref{t22b} using \refT{Th2}. 
\end{proof}

\begin{remark}
  An alternative (but related) proof is obtained by writing 
$\cstar=\sum_{a\in A} \one{a\in\Sstar}c(a)$ and computing mixed second
 moments $\E\bigpar{\one{a,b\in F}c(a)c(b)}$
of the terms in this sum by conditioning and using
  \refC{cmom4ii} with $F=\set{a,b}$. This method can also be used to obtain
similar  (but more complicated) 
formulas for $\E(\cstar)^3$ and higher moments.
\end{remark}

We use also some of the ideas in the second proof of \refT{Th2}
(\refS{Spf2}), to establish
the following formula related to \refL{L1}.

\begin{lemma}\label{L2}
For a bridgeless
matroid with costs $c(a)$ \iid{} uniform random variables, 
\begin{align*}
\E\Bigpar{\sum_{a\in\Sstar}c(a)^2} &= 
\int_0^1 \bigpar{\rk(A)-\E\rk(A(t))}\, 2t\, dt  
=
\int_0^1 \E(\b(A(t))) \,t\, dt.
\end{align*}
\end{lemma}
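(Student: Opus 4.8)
The plan is to establish the lemma using the ``arrival time'' viewpoint of \refL{L1}, computing $\sum_{a\in\Sstar}c(a)^2$ by integrating over two threshold levels rather than one. Recall from the proof of \refL{L1} that the greedy basis $\Sstar$ satisfies $|\Sstar\cap A(t)|=\rk(A(t))$ for every $t$, and that an item $a\in\Sstar$ is a bridge in $A(c(a))$ at the moment it arrives. First I would write $c(a)^2 = \int_0^\infty \int_0^\infty \one{s<c(a)}\one{t<c(a)}\,ds\,dt$, sum over $a\in\Sstar$, and exchange the order of summation and integration (justified by non-negativity) to obtain
\begin{align*}
\sum_{a\in\Sstar}c(a)^2
= \int_0^\infty\int_0^\infty \bigabs{\set{a\in\Sstar : c(a)>\max(s,t)}}\,ds\,dt
= \int_0^\infty\int_0^\infty \bigpar{\rk(A)-\rk(A(\max(s,t)))}\,ds\,dt,
\end{align*}
using $|\Sstar\cap A(u)|=\rk(A(u))$ and $|\Sstar|=\rk(A)$. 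Collapsing the double integral to a single one by integrating over the region $\max(s,t)\le u$ (which has area $u^2$, so $ds\,dt$ contributes a factor $2u\,du$ after differentiation), this becomes $\int_0^\infty \bigpar{\rk(A)-\rk(A(u))}\,2u\,du$. Since the matroid is bridgeless, $\rk(A(u))=\rk(A)$ for $u\ge1$, so the integral runs only over $(0,1)$.

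Next I would take expectations, pushing $\E$ inside the integral (again by non-negativity or dominated convergence, as the integrand is bounded by $2\rk(A)$ on a bounded interval), to get $\E\bigpar{\sum_{a\in\Sstar}c(a)^2}=\int_0^1\bigpar{\rk(A)-\E\rk(A(t))}\,2t\,dt$, which is the first claimed equality. For the second equality I would invoke \refL{Ldiff}, which gives $\E(\b(A(t)))=t\,\frac{d}{dt}\E(\rk(A(t)))$; substituting $\frac{d}{dt}\E\rk(A(t))=\E(\b(A(t)))/t$ and integrating by parts exactly as in the second proof of \refT{Th2} (equation \eqref{ip}) converts $\int_0^1\bigpar{\rk(A)-\E\rk(A(t))}\,2t\,dt$ into $\int_0^1 \E(\b(A(t)))\,t\,dt$; more directly, one integrates $\int_0^1 2t(\rk(A)-\E\rk(A(t)))\,dt$ by parts with $u=\rk(A)-\E\rk(A(t))$, $dv=2t\,dt$, getting boundary terms that vanish (at $t=1$ because $\rk(A(1))=\rk(A)$ a.s., at $t=0$ because of the $t^2$ factor) plus $\int_0^1 t^2\,\frac{d}{dt}\E\rk(A(t))\,dt=\int_0^1 t\,\E(\b(A(t)))\,dt$.

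The main obstacle is making the reduction of the double integral to the single integral fully rigorous, i.e. justifying the identity $\int_0^\infty\int_0^\infty f(\max(s,t))\,ds\,dt = \int_0^\infty f(u)\,2u\,du$ for a monotone non-negative integrand $f$; this is a routine Fubini/change-of-variables computation (write $\set{(s,t):\max(s,t)\le u}$ as a square of area $u^2$ and differentiate, or split into $s\le t$ and $s>t$), but it is the one place where care is needed. Everything else is bookkeeping: the greedy-basis rank identities are already established in \refL{L1}, and the integration-by-parts step is essentially a repetition of \eqref{ip}.
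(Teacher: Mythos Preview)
Your proof is correct and follows essentially the same approach as the paper: represent $c(a)^2$ as an integral, use the greedy-basis identity $|\Sstar\cap A(t)|=\rk(A(t))$ from \refL{L1} to rewrite the sum as $\int_0^1(\rk(A)-\rk(A(t)))\,2t\,dt$, take expectations, and then integrate by parts via \refL{Ldiff} exactly as in \eqref{ip}. The only difference is cosmetic: the paper writes $c(a)^2=\int_0^\infty \one{t<c(a)}\,2t\,dt$ directly (just $x^2=\int_0^x 2t\,dt$), whereas you go through the double integral $c(a)^2=\int_0^\infty\int_0^\infty\one{s<c(a)}\one{t<c(a)}\,ds\,dt$ and then collapse it, which is an unnecessary but harmless detour to the same single integral.
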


\begin{proof}
$\Sstar$ is \as{} unique, and is then given by the greedy algorithm.
In this case we have, arguing as for \eqref{cseq},
\begin{equation}
  \begin{split}
\sum_{a\in \Sstar}c(a)^2
&=\sum_{a\in \Sstar} \int_0^\infty \one{t<c(a)}\,2t\, dt	
=\int_0^\infty \bigpar{|\Sstar|-|\Sstar\cap A(t)|}\,2t\, dt	
\\&
=\int_0^\infty \bigpar{r(A)-r(A(t)}\, 2t\,dt	.
  \end{split}
\raisetag{1.2\baselineskip}
 \label{ca2}
\end{equation}
(Note that \eqref{ca2} 
holds for a matroid  with arbitrary costs $c(a)$, 
even if there is not a unique minimum structure,
as long as $\Sstar$ is chosen by the greedy algorithm.)

The first equality in the lemma follows by taking the expectation, and the
second follows by an integration
by parts, like that in \eqref{ip}.  
\end{proof}

\begin{proof}[Proof of \refT{Thvar}]
The theorem consists of equations \eqref{cov}--\eqref{v2}.
We begin by noting the special case $m=1$ of \refC{Cmom}:
\begin{align}\label{t2}
\E\bigpar{\cstar\cVCG}=\tfrac12\E\bigpar{(\cVCG)^2}.
\end{align}  
Equation \eqref{cov} follows from  \eqref{t2} and \refT{Th2}.

By \eqref{cov}, $\Cov(2\cstar-\cVCG,\cVCG)=0$, and thus
\begin{equation}\label{stieg}
\Var(2\cstar) = \Var(\cVCG)+\Var(2\cstar-\cVCG),
\end{equation}
which yields \eqref{vvcg}.

Next, \eqref{vcstar} follows directly from \refL{L1}'s equation \eqref{m1}.
(The arithmetic is simplified using that $\Var(X)=\Cov(X,X)$
and that $\Cov(X,Y)$ is a bilinear form.)
This does not use our assumptions on the distribution of the costs,
so \eqref{vcstar} holds for any cost distribution, except that in general
one has to integrate to $\infty$.

Finally, \eqref{stieg} and \refT{T22}'s equation \eqref{t22b} yield
\begin{equation}
  \Var\bigpar{\cVCG-2\cstar}
=
4\Var\bigpar{\cstar}-\Var\bigpar{\cVCG}
=\E\Bigpar{\sum_{a\in\Sstar}c(a)^2} ,
\end{equation}
and thus \eqref{v1}--\eqref{v2} follow by \refL{L2}.
\end{proof}

We also want to point out an alternative proof of \eqref{v1}--\eqref{v2}
using martingale theory and the ideas in \refS{Spf2}.

\begin{proof}[Second proof of \eqref{v1}--\eqref{v2}]
The argument in the proof of \refL{Ldiff} really shows that
\begin{equation}\label{martin}
  \mm{x}:=
r(A(1-x)) -r(A) + \int_{1-x}^1 \b(A(t))\,\frac{dt}{t},
\qquad 0\le x\le 1,
\end{equation}
is a (continuous-time) martingale on $[0,1]$.
(With respect to the $\sigma$-fields
$\cF_x$ generated by $\set{c(a)\vee(1- x)}$, \ie, by the item arrivals after
time $1-x$. We may modify $\mmx$ to be right-continuous to conform with
standard theory; this makes no difference below and will be ignored.)
Note that $\mm 0=0$, and thus
$\E\mm{x}=0$, which by \eqref{martin} leads to a (perhaps more
formal) version of the proof of \refL{Ldiff}.

By \eqref{martin}, the martingale $\mm{x}$ has finite variation, with jumps
$\Delta \mm{x}=\Delta r(A(1-x))$ that are (\as) all $-1$.
Hence the quadratic variation $\qm{x}$ of the martingale is given by
\begin{equation}
  \begin{split}
  \qm{x}
&=\sum_{0\le y\le x} (\Delta\mm{y})^2
=\sum_{0\le y\le x} (-\Delta r(A(1-y)))
\\&
=r(A)-r(A(1-x)),	
  \end{split}
\end{equation}
and thus
\begin{equation}
\begin{split}
  \Var(\mm x)=\E \qm{x} = r(A)-\E r(A(1-x)).
\end{split}  
\end{equation}
(See e.g.\ \cite[Section II.6, in particular Corollary~3]{Protter} or
\cite[Theorem~26.6]{Kallenberg}.) 
Furthermore, since $\mmx$ is a martingale,
if $0\le x\le y\le 1$ then
\begin{equation}
  \Cov(\mm x,\mm y)
=
  \Cov(\mm x,\mm x) 
 = r(A)-\E r(A(1-x)) .
\end{equation}
Thus, integrating over $x,y\in\oi$,
\begin{equation}\label{vm}
  \begin{split}
\Var\Bigpar{&\int_0^1\mm x\,dx}	
=
\int_0^1\int_0^1\Cov\bigpar{\mm x,\mm y}\,dx\,dy
\\&=
2\int_0^1 \int_x^1\bigpar{r(A)-\E r(A(1-x))}\,dy\,dx
\\&=
2\int_0^1 (1-x)\bigpar{r(A)-\E r(A(1-x))}\,dx.
  \end{split}
\end{equation}

By \eqref{martin}, 
\begin{equation*}
  \begin{split}
\int_0^1  \mm{x}\,dx
&=
\int_0^1 \bigpar{r(A(1-x)) - r(A)} \,dx
+ \int_0^1\int_{1-x}^1 \b(A(t))\,\frac{dt}{t}\,dx
\\&=
\int_0^1 \bigpar{r(A(t)) - r(A)} \,dt
+ \int_0^1 \b(A(t))\int_{1-t}^1\,dx\,\frac{dt}{t}
\\&=
\int_0^1 \bigpar{r(A(t)) - r(A)} \,dt
+ \int_0^1 \b(A(t))\,dt
\\&=-\cstar+ (\cVCG-\cstar)
  \end{split}
\end{equation*}
by \eqref{m1}--\eqref{m2}.
Hence $\Var\bigpar{\cVCG-2\cstar}=\Var\bigpar{\int_0^1\mm x\,dx}$, 
and \eqref{v1} follows
from \eqref{vm}. 
As said earlier, \eqref{v2} follows from \eqref{v1} by integration
by parts.
\end{proof}

\section{Non-uniform cost distributions}\label{Snonuniform}

We have so far, with a few exceptions (notably \refT{tmom5a}), 
assumed that the item costs are
independent and uniformly distributed $U(0,1)$.
However, we can also derive some related results for random costs with 
other distributions
(not even necessarily identical), still assuming that the costs of different
items are independent.

For the next theorem, we say that a positive random variable 
has a \emph{(weakly) decreasing density function}
if it is absolutely continuous with a density function $f(x)$ on
$(0,\infty)$ satisfying $f(x)\ge f(y)$ when $0<x<y<\infty$,
and
a \emph{strictly decreasing density function}
if furthermore there exists $B\le\infty$ such that $f(x)>f(y)$ when
$0<x<y<B$ and $f(x)=0$ for $x>B$.

\refT{Th1} extends to this setting.
\begin{theorem} \label{Tmono1}
In the general VCG setting,
if the costs $c(a)$ are independent random
variables with decreasing density functions,
then
\begin{align}
 \E(\cstar)\le  \tfrac12\E(\cVCG) . 
\label{tmono1}
\end{align}
If furthermore every $c(a)$ has a strictly decreasing density function,
then the inequality \eqref{tmono1} is strict, 
provided $\E(\cstar)<\infty$.
\end{theorem}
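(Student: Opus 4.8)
The plan is to revisit the proof of \refT{Th1} in \refS{Spf1} and check exactly which step used uniformity, then replace that step by a monotonicity inequality valid for any decreasing density. Recall that in that proof, the expected VCG cost and expected nominal cost were both written termwise, \eqref{SumVCG} and \eqref{SumIdeal}, and the inner conditional expectations \eqref{termVCG} and \eqref{termIdeal} differed only in that the VCG term carried a factor $\vcgt I a$ while the nominal term carried a factor $\E_a\bigpar{c(a)\mid c(a)\le\vcgt Ia}$. The whole argument reduces to the pointwise claim that, conditionally on $\notax$ (so that $\vcgt I a$ is a fixed constant $x\ge 0$),
\begin{align*}
\E_a\bigpar{c(a)\mid c(a)\le x}\le \tfrac12 x.
\end{align*}
For $c(a)\sim U(0,d_a)$ this was an equality when $x\le d_a$ and a strict inequality otherwise. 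So it suffices to prove this conditional-mean bound for any positive random variable with a decreasing density.

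First I would prove the one-variable lemma: if $Z>0$ has a weakly decreasing density $f$, then for every $x>0$ with $\Pr(Z\le x)>0$ we have $\E(Z\mid Z\le x)\le x/2$, with strict inequality whenever $f$ is strictly decreasing on an interval carrying positive mass below $x$. The clean way is a symmetrization/coupling argument: condition $Z$ on $\{Z\le x\}$; the resulting density on $(0,x)$ is $f(z)/\Pr(Z\le x)$, which is still weakly decreasing, so it stochastically dominates — in the sense of having a larger mean only if not — actually the precise statement is that a weakly decreasing density on $(0,x)$ puts at least as much mass on $(0,x/2)$ as on $(x/2,x)$, and moreover the conditional law of $Z$ given $Z\le x$ is stochastically dominated by $U(0,x)$. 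Either way one gets $\E(Z\mid Z\le x)\le\E U(0,x)=x/2$. For strictness: if $f$ is strictly decreasing on a subinterval of $(0,x)$ of positive length, the domination is strict on a set of positive measure, so the inequality of means is strict. I would write this via the explicit computation $\int_0^x z f(z)\,dz$ versus $\tfrac{x}{2}\int_0^x f(z)\,dz$, i.e. $\int_0^x (z-\tfrac x2) f(z)\,dz\le 0$, pairing $z$ with $x-z$ and using $f(z)\ge f(x-z)$ for $z<x/2$.

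Then I would assemble the global statement exactly as in \refS{Spf1}: apply the lemma with $Z=c(a)$ and $x=\vcgt I a$ inside the conditional expectation, conclude that each term of \eqref{SumIdeal} is at most half the corresponding term of \eqref{SumVCG}, and sum, giving $\E(\cstar)\le\tfrac12\E(\cVCG)$. For the strictness clause, assume every $c(a)$ has a strictly decreasing density and $\E(\cstar)<\infty$ (so the manipulations are legitimate and the VCG side is at least finite on the relevant terms). It is enough to exhibit one item $a$ for which the strict inequality in the lemma is triggered with positive probability — i.e. $\vcgt Ia>0$ with positive probability and $c(a)$ has a strictly decreasing density on part of $(0,\vcgt Ia)$. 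Since $\Sstar$ is a.s.\ nonempty (assuming $\SS\neq\emptyset$; if $\SS=\{\emptyset\}$ then both costs are $0$ and there is nothing to prove) and by \eqref{useconditions} every $a\in\Sstar$ has $\vcgt I a>c(a)\ge 0$, there is some $a$ with $\Pr(\vcgt Ia>0)>0$; combined with the strictly decreasing density of $c(a)$ near $0$, the conditional mean strictly beats $x/2$ on that positive-probability event, and strictness propagates to the sum.

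The main obstacle I expect is the strictness argument's bookkeeping, not its idea: one must be careful that the event where the lemma's inequality is strict has positive probability and that integrability ($\E(\cstar)<\infty$, hence each $\E(c(a)\one{c(a)\le\vcgt Ia})<\infty$) lets one pass from ``strict pointwise on a positive-probability set'' to ``strict after taking $\E_\nota$ and summing.'' A minor secondary point is handling possibly infinite $\vcgt I a$ (when no structure avoids $a$, e.g.\ $a$ a coloop-like item in the general setting): there $\E_a(c(a)\mid c(a)\le\infty)=\E c(a)$, which for a decreasing density on $(0,\infty)$ still satisfies $\E c(a)\le$ ``$\infty/2$'' trivially, so the term comparison degenerates but causes no trouble — though for strictness one would simply not rely on such a term. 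I would note that the weakly-decreasing case needs no integrability hypothesis (both sides may be $+\infty$), and state the finiteness assumption only where it is actually used, in the strict case.
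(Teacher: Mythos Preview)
Your proposal is correct and follows essentially the same route as the paper: revisit the proof of \refT{Th1}, observe that the only place uniformity enters is the computation of $\E_a(c(a)\mid c(a)\le \vcgt Ia)$, and replace that by the elementary fact that a random variable with (strictly) decreasing density on $(0,x)$ has mean at most (strictly less than) $x/2$. The paper states this last fact without proof, whereas you spell out the symmetrization argument and the strictness bookkeeping, but the overall strategy is identical.
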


\begin{proof}
As long as the costs are independent,
the proof of \refT{Th1} in \refS{Spf1}
up to \eqref{termideal0} applies to any continuous distributions.
If we condition on the costs of all items except $a$, which determines
$\cVCG(a)$, and then condition further on the event $c(a)\le\cVCG(a)$, 
then the conditional distribution of $c(a)$ has a decreasing density function
with support in $[0,\cVCG(a)]$, and thus the conditional expectation of $c(a)$
is at most $\frac12\cVCG(a)$, with strict inequality if the density function
is strictly decreasing. 
It follows that the expectation \eqref{termideal0} is at most half the
expectation \eqref{termVCG0}, and the result follows by summing over $a$ as
in the proof of \refT{Th1}.
\end{proof}

Likewise, in the matroid case, the corresponding result for conditional
expectations \refT{Thcond} generalizes 
as the theorem below,
now with an inequality where \refT{Thcond} had equality.

\begin{theorem} \label{Tmono2}
For a bridgeless matroid with costs $c(a)$ that are independent random
variables with decreasing density functions,
\begin{align}
  \E\bigpar{\cstar\mid\cVCG} \le \tfrac12\cVCG. \label{tmono2}
\end{align}
If furthermore every $c(a)$ has a strictly decreasing density function,
then the inequality \eqref{tmono2} is strict.
\end{theorem}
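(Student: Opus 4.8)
The plan is to mimic the proof of \refT{Thcond} step by step, replacing each use of \refC{cmom2} (which gave $\E(c(a)\mid\cG)=\tfrac12\cVCG(a)$ exactly) by the corresponding inequality that holds for decreasing densities. First I would observe that the structural results of \refS{Scond2} that do not depend on the uniform assumption still apply: \refL{Lmom5} holds for arbitrary costs, and \refT{tmom5a} holds for any independent continuous costs. In particular, conditioning on $\Sstar$ and on the VCG payments $\cVCG(a)$, $a\in\Sstar$, the costs $c(a)$, $a\in\Sstar$, are conditionally independent, with $c(a)$ distributed as $\bigpar{c(a)\mid c(a)\le\cVCG(a)}$.

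Next I would argue that this conditional distribution has a decreasing density on $[0,\cVCG(a)]$: if $c(a)$ has density $f$ which is weakly decreasing on $(0,\infty)$, then conditioning on $c(a)\le v$ gives density $f(x)/F(v)$ on $[0,v]$, still weakly decreasing (and zero beyond $v$). A positive random variable $X$ with support in $[0,v]$ and a weakly decreasing density satisfies $\E X\le\tfrac12 v$, with strict inequality when the density is strictly decreasing — this is the elementary fact already invoked in the proof of \refT{Tmono1}. Applying this to each $a\in\Sstar$ and summing, the conditional expectation of $\cstar=\sum_{a\in\Sstar}c(a)$ given $\Sstar$ and the $\cVCG(a)$'s is at most $\tfrac12\sum_{a\in\Sstar}\cVCG(a)=\tfrac12\cVCG$. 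Then, exactly as in the proof of \refT{Thcond}, conditioning only on $\cVCG$ (a coarser $\sigma$-field, obtained by a further conditional expectation) preserves the inequality by the tower property, giving \eqref{tmono2}; and for strictly decreasing densities the term-by-term strict inequalities survive summation and the outer conditioning, yielding strict inequality in \eqref{tmono2}.

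One technical point worth addressing: \refC{cmom2} was stated for $U(0,1)$ costs and concluded $c(a)\sim U(0,\cVCG(a))$ using $\cVCG(a)\le1$; here instead I would invoke \refT{tmom5a} directly (in the $F=\Sstar$ case) so that the bound $\cVCG(a)\le1$ is not needed and the argument works for costs supported on all of $(0,\infty)$. Also, if $\E(\cstar)=\infty$ there is nothing to prove for the weak inequality \eqref{tmono2}, and for the strict-inequality claim \refT{Tmono2} does not impose finiteness — but one should note the inequality between conditional expectations is understood in the a.s.\ sense, and strictness means strict a.s.\ (on the event of positive probability where the conditioning is defined), which follows because each summand is strictly less than $\tfrac12\cVCG(a)$ with positive conditional probability.

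The main obstacle is essentially bookkeeping rather than a deep difficulty: one must be careful that the ``strictly decreasing'' hypothesis genuinely transfers to the conditioned density $f(x)/F(v)$ on $(0,v)$ — it does, since scaling by the constant $1/F(v)$ preserves strict monotonicity on $(0,\min(v,B))$ — and that summing strict inequalities over the (random) set $\Sstar$ and then taking a conditional expectation does not silently degrade strictness to non-strictness. The latter is fine because $\cstar<\tfrac12\cVCG$ holds with conditional probability one given $\cG$ (a sum of conditionally independent random variables each strictly below its maximal mean, at least one of which is non-degenerate), hence also after the further conditioning on $\cVCG$ alone.
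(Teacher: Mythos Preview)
Your approach is essentially identical to the paper's: apply \refT{tmom5a} with $F=\Sstar$, use the decreasing-density hypothesis to bound each conditional mean by $\tfrac12\cVCG(a)$ (strictly, under the strict hypothesis), sum, and relax the conditioning via the tower property, exactly as in the proof of \refT{Thcond}. One correction to your final paragraph: it is not true that ``$\cstar<\tfrac12\cVCG$ holds with conditional probability one given $\cG$'' --- what holds (and what you correctly stated earlier) is that the \emph{conditional expectation} satisfies $\E(\cstar\mid\cG)<\tfrac12\cVCG$ a.s., and this strict inequality survives the further conditioning on $\cVCG$ because a strictly positive random variable has strictly positive conditional expectation.
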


\begin{proof}
Let $F$ be a basis in the matroid and condition on $\Sstar=F$, and on the
VCG costs $\cVCG(a)$, $a\in F$. 
(Recall from \refT{tmom5a} that the VCG costs depend only on the item 
costs outside of $F$.)
By \refT{tmom5a} and our hypothesis, the
conditional distribution of each $c(a)$ has a decreasing density function
with support in $[0,\cVCG(a)]$, and thus the conditional expectation of $c(a)$
is at most $\frac12\cVCG(a)$, with strict inequality if the density function
is strictly decreasing. The result follows by summing over $a$ and relaxing
the conditioning as in the proof of \refT{Thcond}.
\end{proof}

Note that while the unconditional inequality \eqref{tmono1}
holds in the general VCG setting,
the conditional inequality \eqref{tmono2} does not always hold in
the non-matroid case, even for $U(0,1)$ costs; 
this will be shown in \refE{Epath}.

One case where these theorems apply (with strict inequalities) is
when the costs have independent exponential distributions.
The exponential distribution is discussed further in \refE{EUexp}.
\begin{corollary} \label{Cmono}
In the general VCG setting, 
if the costs $c(a)$ are independent random
variables with exponential distributions
(possibly with different means),
then \eqref{tmono1} holds with strict inequality.

Moreover, for a bridgeless 
matroid with such costs, \eqref{tmono2} holds with strict inequality.
\qed
\end{corollary}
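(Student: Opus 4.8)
The plan is to verify that the exponential distribution satisfies the hypothesis of Theorems~\ref{Tmono1} and~\ref{Tmono2}, namely that it has a strictly decreasing density function in the sense defined just before Theorem~\ref{Tmono1}, and then simply invoke those two theorems. The exponential distribution $\Exp(\lambda)$ with mean $1/\lambda$ has density $f(x)=\lambda e^{-\lambda x}$ on $(0,\infty)$. This is continuous and satisfies $f(x)>f(y)$ whenever $0<x<y<\infty$, since $e^{-\lambda x}$ is strictly decreasing. Taking $B=\infty$ in the definition of ``strictly decreasing density function'' (so that the condition $f(x)=0$ for $x>B$ is vacuous), we see that $\Exp(\lambda)$ has a strictly decreasing density. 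This holds for every value of $\lambda>0$, so it covers the case of possibly different means for different items.

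Given this, the first assertion is immediate from Theorem~\ref{Tmono1}: in the general VCG setting with independent costs $c(a)$, each having a strictly decreasing density, \eqref{tmono1} holds with strict inequality, provided $\E(\cstar)<\infty$. Here $\E(\cstar)<\infty$ is automatic, since $\cstar\le\sum_{a\in A}c(a)$ and each $c(a)$ has finite mean, so $\E(\cstar)$ is finite. Similarly, the second assertion follows from Theorem~\ref{Tmono2}: for a bridgeless matroid with such costs, \eqref{tmono2} holds with strict inequality.

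There is essentially no obstacle here; the only point requiring any care is the bookkeeping of the definition of ``strictly decreasing density function,'' specifically confirming that the clause about the threshold $B$ is satisfied trivially by choosing $B=\infty$, and that the finiteness hypothesis $\E(\cstar)<\infty$ needed in Theorem~\ref{Tmono1} is met. Both are routine, which is presumably why the statement is recorded as a corollary with a one-line \qed.
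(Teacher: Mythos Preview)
Your proposal is correct and matches the paper's intent: the corollary is stated with an immediate \qed, indicating it follows directly from Theorems~\ref{Tmono1} and~\ref{Tmono2} once one observes that the exponential density is strictly decreasing (with $B=\infty$), exactly as you have done. Your additional check that $\E(\cstar)<\infty$ is a sensible bit of tidiness that the paper leaves implicit.
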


\begin{remark}[\cite{AHW}]
In the general VCG setting, suppose
that each item
  $a$ exists in infinitely many copies $a_1,a_2,\dots$, with costs
$c(a_1)<c(a_2)<\dots$ given by
the points of  a Poisson process on $(0,\infty)$ with constant intensity
$\gl(a)$. Only the cheapest copy $a_1$ may be selected, and its cost
has an exponential distribution, so the nominal cost is the same as with
exponentially distributed costs. However, the existence of further copies 
may reduce the VCG cost, 
since $\cVCG(a_1)\le c(a_2)$. If we
condition on the costs of all copies of all items, except for $a_1$, 
then $c(a_1)\sim U(0,c(a_2))$
and since $\cVCG(a_1)\le c(a_2)$, we obtain $\E\cstar = \frac12\E\cVCG$ as
in the first proof of \refT{Th2}; see also \refR{cap1}.
\end{remark}

As noted in the introduction (and in \cite{AHW}), simple expressions result 
when the item costs are \iid{} 
with Beta distribution $B(\ga,1)$,
i.e., with density function
$\ga x^{\ga-1}$ on $(0,1)$, for some $\ga>0$. 
(Note that $\ga=1$ is the uniform case $U(0,1)$, and that the densities are
decreasing for $\ga\le 1$ but not for $\ga>1$.)

\begin{theorem} \label{Th1a}
In the general VCG setting, if the costs $c(a)$ 
are independent $B(\ga,1)$ random variables, with $\ga>0$,
then
\begin{align}\label{talpha1}
  \E(\cstar) \le \frac{\ga}{\ga+1}\E(\cVCG) .
\end{align}
For a bridgeless
matroid with these item costs, equality holds and, moreover,
\begin{align}\label{talpha2}
  \E\bigpar{\cstar\mid\cVCG} = \frac{\ga}{\ga+1}\cVCG .
\end{align}
\end{theorem}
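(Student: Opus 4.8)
The plan is to mimic the two‑part argument already used for the uniform case (\refT{Th1} and \refT{Thcond}), replacing each appeal to the specific properties of $U(0,1)$ by the corresponding computation for $B(\ga,1)$. The single fact about the Beta distribution that drives everything is: if $c\sim B(\ga,1)$ and we condition on the event $c\le x$ for some $x\in(0,1]$, then the conditional law of $c$ is $B(\ga,1)$ rescaled to $[0,x]$ (its density is $\ga y^{\ga-1}/x^\ga$ on $(0,x)$), so that $\E(c\mid c\le x)=\frac{\ga}{\ga+1}x$. This is exactly the scale‑invariance property that made $U(0,1)$ work, now with the factor $\frac12$ replaced by $\frac{\ga}{\ga+1}$.

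For the general inequality \eqref{talpha1} I would follow the proof of \refT{Th1} in \refS{Spf1} verbatim up through \eqref{termideal0}, which only uses independence of the item costs. Then, conditioning on all costs except $c(a)$ (which fixes $\vcgt Ia$) and further on the event $c(a)\le\vcgt Ia$, the conditional law of $c(a)$ is $B(\ga,1)$ truncated to $[0,\min\{1,\vcgt Ia\}]$, so the inner conditional expectation of $c(a)$ equals $\frac{\ga}{\ga+1}\min\{1,\vcgt Ia\}$ times $\Pra{c(a)\le\vcgt Ia\mid\notax}$. Comparing with \eqref{termVCG}, each term of the nominal‑cost sum is at most $\frac{\ga}{\ga+1}$ times the corresponding VCG term (using $\min\{1,\vcgt Ia\}\le\vcgt Ia$), and summing over $a\in A$ gives \eqref{talpha1}.

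For the matroid statements, the crucial point is that in a bridgeless matroid with costs supported in $[0,1]$ we have $\vcgt Ia\le 1$ a.s.\ by \eqref{le1}, so $\min\{1,\vcgt Ia\}=\vcgt Ia$ and the inequality above becomes an equality termwise, which already yields $\E(\cstar)=\frac{\ga}{\ga+1}\E(\cVCG)$. For the conditional version \eqref{talpha2} I would invoke \refT{tmom5a}: it does not assume uniformity, so taking $F=\Sstar$ a basis and conditioning on $\Sstar$ together with the VCG values $\cVCG(a)$, $a\in\Sstar$, the conditional laws of the $c(a)$, $a\in\Sstar$, are independent, each being $c(a)$ conditioned on $c(a)\le\cVCG(a)$; since $\cVCG(a)\le 1$ this is the truncated‑Beta law with mean $\frac{\ga}{\ga+1}\cVCG(a)$. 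Hence $\E(\cstar\mid \Sstar,(\cVCG(a))_{a\in\Sstar})=\frac{\ga}{\ga+1}\sum_{a\in\Sstar}\cVCG(a)=\frac{\ga}{\ga+1}\cVCG$, and since $\cVCG$ is measurable with respect to this $\gs$‑field, tower gives \eqref{talpha2}, exactly as in the proof of \refT{Thcond}.

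I expect no serious obstacle: the only thing to verify carefully is the truncated‑Beta mean computation $\int_0^x y\cdot\ga y^{\ga-1}\,dy/x^\ga=\frac{\ga}{\ga+1}x$, which is elementary, and the observation that $\vcgt Ia\le 1$ (hence $\cVCG(a)\le1$) in the bridgeless matroid case so that no truncation at $1$ actually bites. The one place deserving a sentence of care is that for $\ga>1$ the density is increasing rather than decreasing, so \refT{Tmono1} and \refT{Tmono2} do not apply and one genuinely needs this separate scale‑invariance argument rather than a monotonicity bound; but the argument above never uses monotonicity of the density, only its scaling behaviour under truncation, so it covers all $\ga>0$ uniformly.
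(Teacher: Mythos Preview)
Your proposal is correct and is essentially the paper's own argument: the paper's proof simply says ``argue as in Theorems~\ref{Tmono1} and~\ref{Tmono2}'' and records the truncated-Beta mean $\E(c(a)\mid c(a)\le v)=\frac{\ga}{\ga+1}\min(v,1)$, and those theorems in turn follow exactly the route you describe through \refT{Th1}, the bound $\vcgt Ia\le 1$ from \eqref{le1}, and \refT{tmom5a}. Your closing remark that the argument relies on the scaling property of $B(\ga,1)$ under truncation rather than on monotonicity (and hence covers all $\ga>0$, not just $\ga\le 1$) is a worthwhile clarification that the paper leaves implicit.
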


\begin{proof}
  We argue again as in the proofs of 
Theorems \ref{Tmono1} and \ref{Tmono2}.
In the present case, the conditional
distribution of $c(a)$ given $c(a)\le v$ has support on $[0,\min(v,1)]$,
where it has density proportional to $x^\ga$.  
It follows that 
\begin{equation}
\E\bigpar{c(a)\mid c(a)\le v}=\frac{\ga}{\ga+1}\min(v,1)  
\end{equation}
for every $v>0$, and the result follows as in the proofs above.   
\end{proof}

If \refT{Th1} casts doubt on the practicality of a VCG auction,
with its factor-of-two expected overpayment,
\refT{Th1a} may be taken more hopefully:
for Beta-distributed costs, with large $\ga$,
the expected overpayment is small.

\section{Examples}\label{SEx}

\begin{example}[Minimum spanning tree] \label{EMST2}
Our motivating example (see also \refE{EMST})
was the minimum-cost spanning tree in the complete graph $K_n$
with \iid{} $U(0,1)$ edge costs.
This object has received extensive study.
It was shown by Frieze \cite{Frieze} that 
its expected cost (in our language, the expected nominal cost) satisfies
\begin{align} \label{eFrieze}
\E\cstar \to \zeta(3) \qquad  \text{as }  n \to \infty ; 
\end{align}
see \cite{CFIJS} for a recent sharper result.

Furthermore, it was shown in \cite{Janson} that 
$n^{1/2}(\cstar-\E\cstar)\dto N(0,\gs^2)$, as \ntoo, where, by W\"astlund 
\cite{Wastlund} (see also \cite{Janson+}),
$\gss=6\zeta(4)-4\zeta(3)$. It can be verified by using estimates in the
proof in \cite{Janson} that the variance converges too, \ie,
\begin{equation}\label{vMSTstar}
  \Var(\cstar) \sim \frac{\gss}n=\frac{6\zeta(4)-4\zeta(3)}n   
=\frac{1.68571\dots}n.
\end{equation}

Our theorems now yield 
the expectation and variance of the VCG cost.
(The expectation \eqref{eMST} was found more directly by \cite{CFMS}.)
\end{example}
\begin{theorem}\label{TMST}
For the minimum spanning tree in a complete graph with \iid{} $U(0,1)$ edge
weigths,
\begin{align}
\E \cVCG &= 2\E\cstar \to 2\zeta(3),  \label{eMST}
\\
\Var\bigpar{\cVCG} \label{vMST}
&\sim \frac{24\zeta(4)-18\zeta(3)}n                 
=\frac{4.33873\dots}n.
\end{align}
\end{theorem}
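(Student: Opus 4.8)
The plan is to combine \refT{Th2}, \eqref{vMSTstar}, and the variance relations of \refT{Thvar}. The expectation statement \eqref{eMST} is immediate: \refT{Th2} gives $\E\cVCG = 2\E\cstar$, and Frieze's result \eqref{eFrieze} gives $\E\cstar\to\zeta(3)$, so $\E\cVCG\to2\zeta(3)$.

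For the variance \eqref{vMST}, the key identity is \eqref{vvcg}, namely $\Var(\cVCG) = 4\Var(\cstar) - \Var(\cVCG-2\cstar)$. We already know $\Var(\cstar)\sim (6\zeta(4)-4\zeta(3))/n$ from \eqref{vMSTstar}, so the task reduces to showing $\Var(\cVCG-2\cstar)\sim c/n$ for the appropriate constant, and then $24\zeta(4)-18\zeta(3) = 4(6\zeta(4)-4\zeta(3)) - c$ forces $c = 24\zeta(4)-18\zeta(3) - (24\zeta(4)-16\zeta(3)) $, i.e. $c = -2\zeta(3)$... which is negative, so I must be careful: actually $24\zeta(4)-18\zeta(3) = 24\zeta(4) - 16\zeta(3) - 2\zeta(3)$, giving $\Var(\cVCG-2\cstar)\sim 2\zeta(3)/n$ (note $2\zeta(3)>0$, good). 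So the real work is to prove $\Var(\cVCG-2\cstar)\sim 2\zeta(3)/n$.

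To evaluate $\Var(\cVCG-2\cstar)$ I would use \eqref{v2}: $\Var(\cVCG-2\cstar) = \int_0^1 \E(\b(A(t)))\,t\,dt$. In the graph-matroid setting, by \refR{Rkk}, $A(t)$ is the random graph $G(n,t)$ on $n$ vertices (each edge present independently with probability $t$), and $\b(A(t))$ counts the \emph{bridges} of $G(n,t)$, i.e. edges whose removal disconnects a component. So I need the asymptotics of $\E(\b(G(n,t)))$. For the relevant range, the dominant contribution comes from $t = \Theta(1/n)$, where $G(n,t)$ is sparse; in particular, isolated edges (two vertices joined by an edge, with no other incident edges) and more generally pendant/bridge edges in small tree-components contribute. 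The expected number of bridges in $G(n,p)$ with $p=\lambda/n$ should be computable: each potential edge $\{i,j\}$ is a bridge iff it is present and $i,j$ lie in different components of $G\setminus\{i,j\}$; by symmetry $\E\b = \binom n2 p\,\Pr(\text{endpoints separated})$. Substituting $t = \lambda/n$, rescaling the integral $\int_0^1 \E(\b(A(t)))\,t\,dt$ as $\int_0^\infty \E(\b(G(n,\lambda/n)))\,(\lambda/n)\,(d\lambda/n)$, and identifying the limit should produce $\frac1n\int_0^\infty \lambda\,\psi(\lambda)\,d\lambda$ for the appropriate limiting function $\psi$, and this integral should evaluate to $2\zeta(3)$. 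This matches the technology already used in \cite{Janson,CFIJS} for $\E\cstar$ and $\Var(\cstar)$, so I would cite those computations and adapt them.

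The main obstacle is the last step: rigorously justifying the rescaling limit and computing $\int_0^1 \E(\b(A(t)))\,t\,dt \sim 2\zeta(3)/n$, including uniform integrability / tail control to exchange limit and integral (the contribution of $t$ bounded away from $0$ must be shown negligible — for $t$ of constant order, $G(n,t)$ is connected with exponentially small bridge count). An alternative, possibly cleaner route is to instead establish $\E(\b(A(t)))$ directly via \refL{Ldiff}, $\E(\b(A(t))) = t\frac{d}{dt}\E(\rk(A(t))) = t\frac{d}{dt}\E(n - \kappa(A(t)))= -t\frac{d}{dt}\E(\kappa(G(n,t)))$, and use known asymptotics for the expected number of components of $G(n,t)$ (again from \cite{Janson,CFIJS}); then $\int_0^1\E(\b(A(t)))\,t\,dt = -\int_0^1 t^2\frac{d}{dt}\E\kappa(G(n,t))\,dt$, integrate by parts, and reduce to an integral of $\E\kappa(G(n,t))$ already analyzed in those references. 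Either way, the probabilistic input is standard sparse-random-graph asymptotics; I expect the bookkeeping to match \cite{Janson} closely enough that \eqref{vMST} follows, and I would present it by reducing to quantities computed there rather than redoing the analysis from scratch.
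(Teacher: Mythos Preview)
Your proposal is correct and follows essentially the same route as the paper. The paper, like you, deduces \eqref{eMST} from \refT{Th2} and \eqref{eFrieze}, then uses \eqref{vvcg} together with \eqref{vMSTstar} to reduce \eqref{vMST} to showing $\Var(\cVCG-2\cstar)\sim 2\zeta(3)/n$; for this it applies \eqref{v1} and \refR{Rkk} to write $\Var(\cVCG-2\cstar)=\int_0^1 2t\bigl(\E\kappa(A(t))-1\bigr)\,dt$ and then invokes the estimates in \cite{CFIJS} (tree-component counting) for the asymptotics. Your ``alternative'' route via \refL{Ldiff} and integration by parts lands on exactly this same integral, so there is no substantive difference; your first route via \eqref{v2} and direct bridge counting is equivalent but slightly less direct than simply quoting \eqref{v1}.
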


\begin{proof}
  The expectation \eqref{eMST} follows directly from 
\refT{Th2} and 
\eqref{eFrieze}. 
For the variance,
by \eqref{v1} and \refR{Rkk},
\begin{align}
\Var\bigpar{\cVCG-2\cstar}
&=
\int_0^1 \bigpar{\rk(A)-\E\rk(A(t))}\, 2t\, dt  
\\&
=
  \int_0^1 2t\big(\E\kk(A(t))-1)\big) \, dt.
\end{align}
The latter integral can be estimated using minor modifications of
estimates in \cite{CFIJS} 
(the main terms come from counting tree components; we omit the details),
which yields
\begin{align}\label{vmst9}
\Var\bigpar{\cVCG-2\cstar}
&=
  \int_0^1 2t\big(\E\kk(A(t))-1)\big) \, dt
\sim \frac{2\zeta(3)}n.                               
\end{align}
Finally, \eqref{vMST} follows from \eqref{vMSTstar} and \eqref{vmst9} by
\eqref{vvcg}.
\end{proof}

\begin{example}[Uniform matroid] \label{EU}
As a very simple example, consider 
the uniform matroid $\mathsf U_{n,k}$: it has $n$ elements (items)
and every
subset with exactly $k$ elements is a basis. Our objective is thus simply
to buy any $k$ items, and obviously the minimum structure consists of the $k$
cheapest items.
This simple example can easily be analyzed directly, without the general
theory above. 
We assume $1\le k\le n-1$ so that $\mathsf U_{n,k}$ is bridgeless.

If we order the items as $a\subb 1,\dots,a\subb n$ in order of increasing
cost (for simplicity assuming that there are no ties), 
the minimum structure is  $\Sstar=\set{a\subb1,\dots,a\subb k}$. 
Thus
\begin{align}\label{eu1}
  \cstar &= \sum_{i=1}^k c(a\subb{i}). 
\end{align}
If
$a\in\Sstar$, then in the instance $\Iinf a$, we select $a\subb{k+1}$
instead of $a$;
thus the incentive payment \eqref{defcstar} is $c(a\subb{k+1})-c(a)$ and the
VCG payment for $a$ is, by \eqref{split-VCG}, 
$c(a\subb{k+1})$. 
Hence
\begin{align}\label{eu2}
  \cVCG &= k c(a\subb{k+1}).
\end{align}
(In fact, it is easily seen that $\vcgt Ia=c(a\subb{k+1})$ if $a\in\Sstar$ and
$\vcgt Ia=c(a\subb k)$ if $a\notin\Sstar$.)

If
the costs $c(a_i)=U_i$ are \iid{} $U(0,1)$, we thus
have
\begin{align}
  \cstar &= \sum_{i=1}^k U\subb{i}, \label{eustar}
\\
  \cVCG &= k U\subb{k+1}, \label{euvcg}
\end{align}
where $U\subb i$, the $i$th order statistic of $\set{U_j}_1^n$, 
is well known to have Beta
distribution $B(i,n+1-i)$ 
and mean $\E U\subb i=i/(n+1)$.
Thus
\begin{align}
  \E \cstar &= \sum_{i=1}^k\frac{i}{n+1} = \frac{k(k+1)}{2(n+1)},
\\
\E\cVCG &= \frac{k(k+1)}{n+1},
\end{align}
in accordance with \refT{Th2}.

In this simple case, we can study 
the conditional distribution 
of $\cstar$
given $\cVCG$ in detail. 
It is well known that, conditioned on $U\subb{k+1}$, the 
distribution of $(U\subb i)_{i=1}^k$ equals the distribution of the order
statistics of $k$ independent uniform random variables on the interval
$(0,U\subb{k+1})$. 
Hence,
\begin{equation*}
\bigpar{(U\subb i)_{i=1}^k\mid U\subb{k+1}}  
\eqd (V\subb i U\subb{k+1})_{i=1}^k,
\end{equation*}
where $(V_i)_1^k$ also are \iid{} $U(0,1)$ random variables, independent of
$(U_j)_1^n$. Consequently, \eqref{eustar}--\eqref{euvcg} yield
\begin{equation}
\bigpar{\cstar\mid\cVCG}
\eqd S_k U\subb{k+1} 
=\frac{S_k}k\cVCG,
\end{equation}
where $S_k:=\sum_1^k V_i$ is independent of $\cVCG$.
Since $\E(S_k/k)=\frac12$, this is in accordance with \refT{Thcond}.
It also shows that the conditional distribution of $\cstar$ given $\cVCG$
may vary (although its expectation is always $\frac12\cVCG$). For example,
if $k=1$ then $\cstar$ is uniform on $(0,\cVCG)$, while if $k$ is large,
$\cstar$ is concentrated close to $\frac12\cVCG$.
In particular, $\E(\cstar)^2$ is not determined by $\E(\cVCG)^2$, and there
is no analogue of \refT{Th2} for the second moment.
Likewise there is no analogue of \refC{Cmom} for higher moments of $\cstar$
without $\cstar$ appearing on the right-hand side;
this is clear for $m=0$, and counterexamples can be generated for larger
values of $m$.

The variances are also easily computed, 
using that for $1 \leq i \leq j \leq n$, 
$\E(U\subb i \mid U\subb j)=\frac i j U\subb j$. 
We find that, for $1\le i\le j\le n$,
\begin{align}
  \E (U\subb iU\subb j)& = \frac{i(j+1)}{(n+1)(n+2)}
\\
  \Cov (U\subb i,U\subb j)& = \frac{i(n+1-j)}{(n+1)^2(n+2)},
\end{align}
and straightforward calculations from \eqref{eustar}, \eqref{euvcg} yield
\begin{align}
  \Var(\cstar) 
&= \frac{k(k+1)(4nk+2n+k+2-3k^2)}{12(n+1)^2(n+2)},
\\
  \Var(\cVCG) 
&= \frac{k^2(k+1)(n-k)}{(n+1)^2(n+2)}.
\end{align}

From \eqref{cov} this allows us to obtain
$\Cov(\cstar,\cVCG)$;
we can also calculate it too directly and verify equality.
By \eqref{vvcg}, $\Var(\cVCG-2\cstar)=k(k+1)(k+2)/3(n+1)(n+2)$, and
this can be checked against its calculation by
\eqref{v1}--\eqref{v2}.

In particular, if $n\to\infty$ and $1\ll k\ll n$, then 
$\Var(\cstar)/\Var(\cVCG)\to1/3$, and more precisely
the covariance matrix of
$(\cstar,\cVCG)$ is asymptotic to
\begin{equation}
  \frac{k^3}{n^2}
\begin{pmatrix}
\xfrac{1}3 & \xfrac{1}2 \\ 1/2 & 1
\end{pmatrix}  .
\end{equation}
If $n\to\infty$ with $k\ge1$ fixed, we have instead
$\Var(\cstar)/\Var(\cVCG)\to (2k+1)/(6k)$.

The latter limit demonstrates that $\Var(\cstar)$ and $\Var(\cVCG)$
are not in fixed proportion,
so there is not a direct analogue of \refT{Th2} for variance.
\end{example}

\begin{example}[Uniform matroid, exponential distribution] \label{EUexp}  
We continue to consider 
the uniform matroid $\mathsf U_{n,k}$ as in \refE{EU}, but now consider
costs $c(a)$ that are \iid{} and exponential $\Exp(1)$.
Let $X\subb i=c(a\subb i)$, the $i$th smallest cost.
It is well-known that the increments $X\subb i-X\subb{i-1}$ 
(where $X\subb0=0$)
are independent random variables with $X\subb i-X\subb{i-1}\sim
\Exp(1/(n-i-1))$. Consequently,
\begin{equation}
  \E X\subb i = \sum_{j=1}^i \frac{1}{n-j+1}
\end{equation}
and thus by \eqref{eu1}--\eqref{eu2},
\begin{align}
\E \cstar &= \sum_{j=1}^k \frac{k-j+1}{n-j+1}
= \sum_{j=1}^k \frac{j}{n-k+j},
\\
\E \cVCG &= k\sum_{j=1}^{k+1} \frac{1}{n-j+1}.
\end{align}
Not only are these not a factor of 2 apart, 
as we already knew from \refC{Cmono},
but there doesn't seem to be any simple general relation.

If for simplicity we consider the case $k=1$, where our aim just is to
select the cheapest item, it follows from \refT{tmom5a} that the conditional
distribution of $\cstar$ given $\cVCG=v$ has density $e^{-x}/(1-e^{-v})$ on
$(0,v)$. A simple calculation yields
\begin{align} \label{expRatio}
 \E\bigpar{\cstar\mid\cVCG=v}
=\frac{1-e^{-v}-ve^{-v}}{1-e^{-v}}
=\frac{e^{v}-1-v}{e^{v}-1}.
\end{align}
This is always $<v/2$, as shown by \refC{Cmono}, with asymptotic equality as
$v\to0$.
\end{example}

\begin{example}[Paths]\label{Epath}

\cite{CFMS} considered the minimum-\weight 
path between two given vertices
in $K_n$ with \iid{} 
$\Exp(1)$
costs of the edges. They showed that
$\E\cVCG\sim 2\E\cstar$ as $n\to\infty$. 
It is not hard to show that
the same holds with \iid{}
$U(0,1)$ costs.
In these cases, however, there is not equality in \eqref{th1}.
(This is not a matroid case, so \refT{Th2} does not apply.)

For a simple explicit example, consider $n=3$. Denoting the three edges in
$K_3$ by
$a_1,a_2,a_3$ and their costs by $X_1,X_2,X_3$,  
and let the two given vertices be those joined by $\set{a_1}$
and $\set{a_2,a_3}$, so that
\begin{equation}\label{ep1}
  \cstar = \min\set{X_1,X_2+X_3}.
\end{equation} 
It follows from this and 
\eqref{nota} that $\cVCG(a_1)=X_2+X_3$,  $\cVCG(a_2)=(X_1-X_3)_+$ and
$\cVCG(a_3)=(X_1-X_2)_+$, and thus
\begin{align}
  \cVCG=
  \begin{cases}
	X_2+X_3, &\text{if } X_1\le X_2+X_3, \\
	2X_1-X_2-X_3, &\text{if } X_1\ge X_2+X_3.
  \end{cases}
\end{align}
Thus,
\begin{equation}\label{ep3}
  \cVCG=\max(X_2+X_3,\,2X_1-X_2-X_3).
\end{equation}
Simple calculations, which we omit, show that 
in the $U(0,1)$ case,
$\E\cstar = 11/24$ while $\E\cVCG = 13/12$
(a check on these is that from \eqref{ep3},
$2\cstar+\cVCG=2X_1+X_2+X_3$, thus $2\E\cstar+\E\cVCG=2$), 
showing strict inequality in
\eqref{th1}.
Similarly, in the $\Exp(1)$ case, $\E\cstar = 3/4$ while $\E\cVCG = 5/2$,
with strict inequality in \eqref{tmono1} as shown in \refC{Cmono}.

Moreover, still for $n=3$ and with $U(0,1)$ costs, 
the joint density of $\cstar$ and $\cVCG$ 
in $[0,1]\times[0,2]$
can easily be calculated from \eqref{ep1}--\eqref{ep3} to be
\begin{equation}\label{epf}
f(x,y)=
  y\one{x\le y\le 1}
+ (2-y)\one{1<y\le 2}
+ \frac{x}2\one{x\le \min(y,2-y)}.
\end{equation}
For example, $f(x,1)=1+x/2$, $x\in[0,1]$, and thus the conditional
distribution of $(\cstar\mid\cVCG=1)$ has density
$(4+2x)/5$ and conditional expectation
$\E(\cstar\mid\cVCG=1)=8/15>1/2$, showing that the inequality
\eqref{tmono2}
does not necessarily hold in the non-matroid setting.
More generally, it follows from \eqref{epf} that the 
density of $\cVCG$ is
\begin{equation}
f_{2}(y)=
  \begin{cases}
\frac54 y^2 	&\text{if } 0\le y\le 1, \\
(2-y)+\tfrac14(2-y)^2
      &\text{if } 1\le y\le 2,
  \end{cases}
\end{equation}
and the conditional expectation   $\E(\cstar\mid\cVCG)$
is 
\begin{equation*}
  \E(\cstar\mid\cVCG=y)=
  \begin{cases}
\frac{2y^3/3}{f_{2}(y)}
=\frac{8}{15}y 	&\text{if } 0\le y\le 1, \\
\frac{\tfrac12(2-y)+\tfrac16(2-y)^3}{f_2(y)} =\frac{6+2(2-y)^2}{12+3(2-y)}
   &\text{if } 1\le y\le 2.
  \end{cases}
\end{equation*}
In this example, $\E(\cstar\mid\cVCG)$
is a \emph{decreasing} function of $\cVCG$ in the interval
$\cVCG\in[1,6-\sqrt{19}]$, which shows that the behaviour of the
conditional expectation can be quite complicated and unexpected in 
the general non-matroid case.
\end{example}

\begin{bibdiv}
\begin{biblist}

\bib{AHW}{misc}{
    AUTHOR = {Angel, Omer},
    AUTHOR = {Holroyd, Ander},
    AUTHOR = {W\"astlund, Johan},
     note  = {Personal communication},
     year =  {2009},
}

\bib{AT07}{article}{
    AUTHOR = {Archer, Aaron},
    author = {Tardos, {\'E}va},
     TITLE = {Frugal path mechanisms},
   JOURNAL = {ACM Trans. Algorithms},
    VOLUME = {3},
      YEAR = {2007},
    NUMBER = {1},
     PAGES = {Art. 3, 22},
      ISSN = {1549-6325},
}

\bib{CFIJS}{misc}{
    AUTHOR = {Cooper, Colin},
    AUTHOR = {Frieze, Alan},
    AUTHOR = {Ince, Nate},
    AUTHOR = {Janson, Svante},
    AUTHOR = {Spencer, Joel},
    title = {On the length of a random minimum spanning tree},
    note = {arXiv.1208.5170},
    year = {2012},
}

\bib{CFMS}{incollection}{
    AUTHOR = {Chebolu, Prasad},
    author = {Frieze, Alan},
    author = {Melsted, P{\'a}ll},
    author = {Sorkin, Gregory B.},
     TITLE = {Average-case analyses of {V}ickrey costs},
 BOOKTITLE = {Approximation, randomization, and combinatorial optimization},
    SERIES = {Lecture Notes in Comput. Sci.},
    VOLUME = {5687},
     PAGES = {434--447},
 PUBLISHER = {Springer},
   ADDRESS = {Berlin},
      YEAR = {2009},
}

\bib{clarke}{article}{
  author =   {Clarke, E.H.},
  title =    {Multipart pricing of public goods},
  journal =      {Public Choice},
  year =     {1971},
  volume =   {8},
  pages =    {17--33},
}

\bib{Edmonds}{article}{
    AUTHOR = {Edmonds, J.},
     TITLE = {Matroids and the greedy algorithm},
   JOURNAL = {Math.\ Programming},
    VOLUME = {1},
      YEAR = {1971},
     PAGES = {127--136},
      ISSN = {0025-5610},
}

\bib{Frieze}{article}{
    AUTHOR = {Frieze, A. M.},
     TITLE = {On the value of a random minimum spanning tree problem},
   JOURNAL = {Discrete Appl. Math.},
    VOLUME = {10},
      YEAR = {1985},
    NUMBER = {1},
     PAGES = {47--56},
}

\bib{Gale}{article}{
    AUTHOR = {Gale, David},
     TITLE = {Optimal assignments in an ordered set: {A}n application of
              matroid theory},
   JOURNAL = {J. Combinatorial Theory},
    VOLUME = {4},
      YEAR = {1968},
     PAGES = {176--180},
}

\bib{Groves}{article}{
  author =   {Groves, Theodore},
  title =    {Incentives in teams},
  journal =      {Econometrica},
  year =     {1973},
  volume =   {41},
  number = {4},
  pages =    {617--631},
}

\bib{Janson}{article}{
    AUTHOR = {Janson, Svante},
    title = {The minimal spanning tree in a complete graph and
             a functional limit theorem for trees in a random graph},
    journal = {Random Structures Algorithms},
    volume = {7},
    year = {1995},
    pages = {337--355},
}

\bib{Janson+}{article}{
    AUTHOR = {Janson, Svante},
    AUTHOR = {W{\"a}stlund, Johan},
    title = {Addendum to ``The minimal spanning tree in a complete graph and
             a functional limit theorem for trees in a random graph''},
    journal = {Random Structures Algorithms},
    volume = {28},
    year = {2006},
    pages = {511--512},
}

\bib{Kallenberg}{book}{
    TITLE = {Foundations of Modern Probability},
    AUTHOR = {Kallenberg, Olav},
    SERIES = {Probability and its Applications (New York)},
   EDITION = {Second},
 PUBLISHER = {Springer-Verlag},
   ADDRESS = {New York},
      YEAR = {2002},
     PAGES = {xx+638},
      ISBN = {0-387-95313-2},
}

\bib{Protter}{book}{
  TITLE = {Stochastic Integration and Differential Equations},
  AUTHOR = {Protter, Philip E.},
    SERIES = {Stochastic Modelling and Applied Probability},
    VOLUME = {21},
      NOTE = {Second edition. Version 2.1,
              Corrected third printing},
 PUBLISHER = {Springer-Verlag},
   ADDRESS = {Berlin},
      YEAR = {2005},
     PAGES = {xiv+419},
      ISBN = {3-540-00313-4},
}

\bib{Rado}{article}{
    AUTHOR = {Rado, R.},
     TITLE = {Note on independence functions},
   JOURNAL = {Proc.\ London Math.\ Soc.\ (3)},
    VOLUME = {7},
      YEAR = {1957},
     PAGES = {300--320},
      ISSN = {0024-6115},
}

\bib{vickrey}{article}{
    AUTHOR = {Vickrey, William},
     TITLE = {Counterspeculation, auctions and competitive sealed tenders},
   JOURNAL = {Journal of Finance},
    VOLUME = {16},
     PAGES = {8--37},
      YEAR = {1961}
}

\bib{Wastlund}{techreport}{
author = {W{\"a}stlund, Johan},
institution = {Link\"oping University},
pages = {4},
publisher = {Link\"oping Universisty Electronic Press},
title = {Evaluation of {J}anson's constant for the variance in the random minimum spanning tree problem},
series = {Link\"oping Studies in Mathematics},
number = {7},
year = {2005}
}

\bib{Welsh}{book}{
    AUTHOR = {Welsh, D. J. A.},
     TITLE = {Matroid Theory},
      NOTE = {L. M. S. Monographs, No. 8},
 PUBLISHER = {Academic Press [Harcourt Brace Jovanovich Publishers]},
   ADDRESS = {London},
      YEAR = {1976},
     PAGES = {xi+433},
}

\bib{White}{book}{
     TITLE = {Theory of Matroids},
    SERIES = {Encyclopedia of Mathematics and its Applications},
    VOLUME = {26},
    EDITOR = {White, Neil},
 PUBLISHER = {Cambridge University Press},
   ADDRESS = {Cambridge},
      YEAR = {1986},
     PAGES = {xviii+316},
      ISBN = {0-521-30937-9},
       DOI = {10.1017/CBO9780511629563},
       URL = {http://dx.doi.org/10.1017/CBO9780511629563},
}

\end{biblist}
\end{bibdiv}
\end{document}